\newtheorem{theorem}{Theorem}
\newtheorem{corollary}{Corollary}
\newtheorem{remark}{Remark}
\newtheorem{lemma}{Lemma}
\numberwithin{equation}{section}
\numberwithin{theorem}{section}
\numberwithin{corollary}{section}
\numberwithin{remark}{section}
\numberwithin{lemma}{section}
\newcolumntype{P}[1]{>{\centering\arraybackslash}p{#1}}
\begin{document}
\pagenumbering{arabic}

\author{Garima Agrawal}
\address{IISER Pune, India}
\email{agrawal.garima@students.iiserpune.ac.in}
\author{Anindya Goswami}
\address{IISER Pune, India}
\email{anindya@iiserpune.ac.in}
\title{A semi-Markovian approach to Model the tick-by-tick dynamics of stock price}
\thanks{The research by the first author is supported by IISER Pune research fellowship. The research by the second author was supported in part by the SERB MATRICS (MTR/2017/000543), DST FIST (SR/FST/MSI-105), NBHM 02011/1/2019/NBHM(RP)R\&D-II/585, and DST/INT/DAAD/P-12/2020. }
\maketitle

\begin{abstract}
We model the stock price dynamics through a semi-Markov process obtained using a Poisson random measure. We establish the existence and uniqueness of the classical solution of a non-homogeneous terminal value problem and we show that the expected value of stock price at horizon can be obtained as a classical solution of a linear partial differential equation that is a special case of the terminal value problem studied in this paper. We further analyze the market making problem using the point of view of an agent who posts the limit orders at the best price available. We use the dynamic programming principle to obtain a HJB equation. In no-risk aversion case, we obtain the value function as a classical solution of a linear pde and derive the expressions for optimal controls by solving the HJB equation. 
\end{abstract}
{\textbf{Key words:}} Semi-markov process, Poisson random measure,  Market making, HJB Equation, Utility Function 

\section{Introduction}
Market microstructure is a branch of finance that deals with the details of how the trading occurs in the financial market. According to the National Bureau of Economic Research(NBER), the study of market microstructure is concerned with the theoretical, empirical, and experimental research on the economics of security markets, including the role of information in the price discovery process, the definition, measurement, control, and determinants of liquidity and transactions costs, and their implications for the efficiency, welfare, and regulation of alternative trading mechanisms and market structures\cite{nber}.\par
In this work, we aim to study the market microstructure by modelling the tick by tick dynamics of stock price via a semi-Markov process where the duration between two ticks may not be exponential\cite{pham1}.

\section{Price Dynamics}
\subsection{Overview of the Model} The ask or bid prices are quotations by agents who want to sell or buy certain units of a financial asset respectively. The lowest ask price and the highest bid price are called the \emph{`best ask'} and \emph{`best bid'} prices respectively. At a given time the \emph{bid-ask spread} corresponds to an interval, whose upper limit is the best ask price and the lower limit is the best bid price. There are two types of orders prominent in the financial market which we consider in our model known as the \emph{limit order} and the \emph{market order}. The \emph{limit order} is posted by a trader who wishes to buy or sell at a price different than the current best price and it is executed only when a potential sell or buy order arrives matching the limit order.\\\par

On the contrary, the \emph{market order} is placed by a potential buyer or seller who wishes to buy or sell at the current best ask or best bid price and the order is exercised immediately. On the basis of `size', the market orders can be termed \emph{small} or \emph{big}. The small orders do not cause any change in the price of the asset. On the other hand, the large market orders may exhaust all the liquidity available in the market. At these occasions the stock price moves down or up if the market order is of sell or buy type respectively. For the purpose of modelling, from now onward `price' refers to the \emph{mid-price} of the stock which is the arithmetic mean of the best bid and the best ask price. \\\par

We model the return dynamics of price through a semi-Markov process which changes its state whenever the stock price goes up or down. In the most natural way, the state space can be thought of having just two elements each one representing positive and negative price jump direction. Since the direction of the price jump can be same in two consecutive jumps, we consider an auxiliary state process where the state space is not binary rather it has four states but a given state communicates to only two other states that are specified beforehand for that state. This helps us to obtain the \emph{state} process as a semi-Markov process with an embedded Markov chain that covers all the possibilities of price transition exactly once.\\ \par 

The \emph{price} process is a pure jump process and we assume that the absolute value of simple rate of return always remains constant denoted by $\delta$. We also assume that the bid-ask spread is equal to $2p\delta$ where $p$ is the current mid-price. The price process together with state process and its associated \emph{age} process form an augmented Markov process obtained as a solution of stochastic integral equations involving integration with respect to a Poisson random measure. Next we formulate the required mathematical set up and express the stock price dynamics formally in mathematical terms.

\subsection{Construction of underlying process}
%For any discrete time chain one can replace each state by a group of spurious states appropriately and define an auxiliary process on the extended state-space so that a map from the extended state-space to the original, that identifies all states inside a group, gives back the original discrete time chain. This can be done in a manner that the auxiliary chain never transits to the present state.For example, if
Assume that $X$ is a Markov chain on the binary set $\{o,e\}$ with transition matrix $P$ and the auxiliary chain $X'$ is on $\{1,2,3,4\}$, with transition matrix $\begin{pmatrix} O_{2\times 2}  & P \\  P  &  O_{2\times 2} \end{pmatrix}$. If the projection map $\varphi\colon \{1,2,3,4\} \to \{o,e\}$ is such that $\varphi(1)= \varphi(3)= o $, and $\varphi(2)= \varphi(4) =e$, then $X$ and $\varphi (X')$, the projection of auxiliary chain, have identical law. Here, $X'$ does not transit to the present state even if $X$ does. This is further clarified by the diagram in Figure \ref{fig:TPM}. 
\begin{figure}[h]
    \centering
    \includegraphics[width=0.6\linewidth]{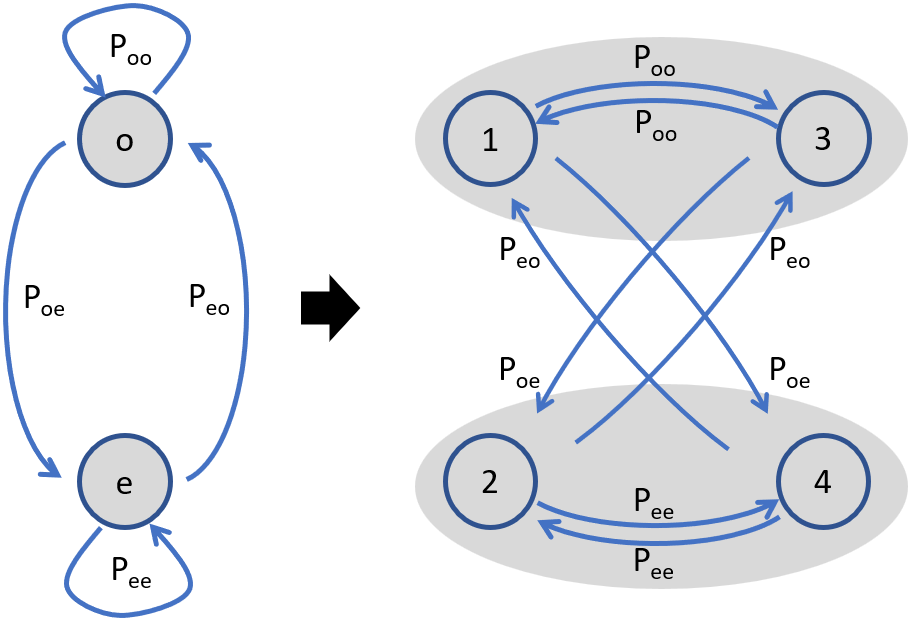}
    \caption{The diagrammatic representation of transition probability matrices of a binary Markov chain and its auxiliary process}
    \label{fig:TPM}
\end{figure}
We recall that transition to the same state by a continuous-time process on a discrete state-space is imperceptible. However, in the above manner, such a process can be formulated as the projection of its auxiliary process. We make use of the auxiliary chain for modelling the price direction process.\\ \par

Let $(\Omega,\mathscr{F},\{{\mathscr{F}}_t\}_{t\geq 0},\mathbb{P})$ be a complete filtered probability space satisfying the usual hypothesis. The state space $\{1,2,3,4\}$ is denoted by $\mathcal{X}$. We define an equivalence relation `$\equiv$' on $\mathcal{X}$ such that for $i,j\in \mathcal{X}$, we say $i\equiv j$ if and only if both belong to either $\{1,2\}$ or $\{3,4\}$ and we say $i \not\equiv j$ otherwise. We denote the set $\{(i,j)\in \mathcal{X}^2 \mid i\not\equiv j\}$ by ${\mathcal{X}}^{'}$. For each $\nu \in \{+,-\}$, let $h_{\nu}$ be a non-negative measurable function defined on $[0,\infty)$ and for each $s \in [0,\infty)$, let $H_+(s)$ and $H_-(s)$ be two right open, left closed non-overlapping intervals with length equal to $h_+(s)$ and $h_-(s)$ respectively. We denote $H_+(s) \cup H_-(s)$ and $h_+(s)+h_-(s)$ by $H(s)$ and $h(s)$ respectively for each $s \geq 0$.~\\ \par
Set maps $\alpha:\mathbb{Z}\to \{-1, 1\}$ and $sign : \{-1, 1\} \rightarrow \{-, +\}$  such that $\alpha(i) = {(-1)}^i$, $sign(-1) = -$ and $sign(+1) = +$. Let $\bar{\alpha}(i)$ denote $sign(\alpha(i))$. For the brevity of notation, for each $s \geq 0,(i,j) \in {\mathcal{X}}^{'}$ we denote $H_{\bar\alpha(i+j)}(s)$ and $h_{\bar\alpha(i+j)}(s)$ by $H_{ij}(s)$ and $h_{ij}(s)$ respectively. Now we define the real valued measurable maps $g_1$, $g_2$ and $g_3$ on the set $\mathcal{X}\times[0,\infty)\times\mathbb{R}$ as follows:
\begin{align}
&g_1(i,s,z) =s1_{H(s)}(z)\label{agefn}\\
&g_2(i,s,z) = \sum_{j \not\equiv i} (j-i)1_{H_{ij}(s)}(z)\label{statefn}\\
&g_3(i,s,z) = \sum_{j \not\equiv i} \delta \alpha(j)1_{H_{ij}(s)}(z) \label{pricefn}
\end{align}
where $1_A$ denotes the indicator function of set $A$ and $\delta>0$.\\~\\
Let $S_0$, $I_0$ and $P_0$ be ${\mathscr{F}}_0$-measurable independent random variables. We make the following assumptions which are applicable throughout this paper.
\begin{itemize}
\item[(A1)] $h_+$ and $h_-$ are continuously differentiable maps.
\item [(A2)]$\sup\limits_{\nu \in \{+,-\},y\in [0,\infty)} h_{\nu}(y)<\infty$ and we denote it by $c_1$.
\item [(A3)]For each $\nu \in \{+,-\}, \lim_{y\rightarrow\infty}\int_{0}^{y}h_{\nu}(v)dv = \infty.$
\item[(A4)] $h(y)>0$ for every $y\in[0,\infty)$.
\item[(A5)] $E[P_0]<\infty.$
\item[(A6)] The probability distributions of the random variables $S_0$, $I_0$ and $P_0$ have full support.
\end{itemize}~\\
Let $\wp(ds,dz)$ be a Poisson random measure defined on $[0,\infty) \times \mathbb{R}$ with intensity measure equal to the product Lebesgue measure on $[0,\infty) \times \mathbb{R}$ and it is adapted to the filtration $\{{\mathscr{F}}_t\}$. Let us consider the following processes:
\begin{align}
&S_t = S_0+t-\int_{(0,t]\times\mathbb{R}}g_1(I_{v^-},S_{v^-},z)\wp(dv,dz)\label{age}\\
&I_t =I_0 + \int_{(0,t]\times\mathbb{R}}g_2(I_{v^-},S_{v^-},z)\wp(dv,dz)\label{state}\\
&P_t = P_0+\int_{(0,t]\times\mathbb{R}}P_{v^-}\; g_3(I_{v^-},S_{v^-},z)\wp(dv,dz)\label{price}
\end{align}~\\
It is known that under the assumption (A2), there exists a pathwise unique strong solution $(P,I,S) :=\{(P_t, I_t, S_t)\}_{t\ge 0}$ to the above system of equations which is right continuous having left limits, locally bounded and strongly Markov (see Theorem 3.4 (p-474) of \cite{cinlar2011probability}). It is evident from the above equations that $P$, $I$ are piecewise constant and $S$ is a piecewise linear process and these processes have jump discontinuities at the same time.\par
\subsection{Law of the underlying process}
It can be seen that under the assumptions (A2) and (A3), $I$ is a semi-Markov process with $S$ being the `age' process \cite{mkghoshsaha}. We now define few parameters associated with the process $(I,S)$.
Further for each $(i,j) \in {\mathcal{X}}^{'}$, we define a map $p_{ij}$ on the set $[0,\infty)$ such that for $y \geq 0$,
\begin{equation}
p_{ij}(y) = \frac{h_{ij}(y)}{h(y)}\label{p_{ij}(y)}. 
\end{equation}
Due to assumption (A4), the map $p_{ij}$ is well-defined. Now let $F$ and $f$ be two maps defined on $[0,\infty)$ such that for $y \geq 0,$
\begin{align}
&F(y) = 1-e^{-\int_0^y h(v)dv},\label{F(y)}\\
&f(y) =  h(y)e^{-\int_0^y h(v)dv}.\label{f(y)}
\end{align}~\\
The assumption (A1) affirms that $F$, its derivative on the set $(0,\infty)$ given by $f$, and $p_{ij}$ for all $(i,j) \in {\mathcal{X}}^{'}$ are continuously differentiable functions. Owing to the assumptions (A3), (A4) and the expression of $F$, it follows that $F$ is a strictly increasing function, it never attains the value 1 for any $y \in [0,\infty)$ and $\lim_{y\rightarrow\infty} F(y)=1$.~\\ \par
It is known that $p_{ij}(y)$ is the conditional probability that the process transits next to the state `$j$' given that the process is currently in state `$i$' and the next transition happens after `$y$' unit of time has been elapsed since the previous transition. Further, $F$ denotes the conditional cumulative distribution function of the holding time associated with the process $I$ given the current state \cite{mkghoshsaha}. It turns out that $F$ is independent of the state in this scenario.~\\
We also see that for $y \geq 0$,
\begin{equation}
\frac{f(y)p_{ij}(y)}{1-F(y)}=h_{ij}(y).\label{hij}
\end{equation}
%We denote the embedded Markov chain of $I$ by ${(\Hat{I})}_n$ with $\hat{p}_i:=\mathbb{P}(\Hat{I}_1=i)$ for $i\in\mathcal{X}$.\par
Now let $\psi$ be a real-valued continuous map defined on the set $[0,\infty)\times\mathcal{X}\times[0,\infty)$ which is compactly supported as well as continuously differentiable function in the third variable. Let $G^\psi$ be real valued map on $(0,T]\times \mathbb{R}$ such that for $(v,z)\in (0,T]\times\mathbb{R}$,
\begin{align}
G^\psi(v,z):=&\psi(P_{{v}^{-}}(1+g_3(I_{{v}^{-}},S_{{v}^{-}},z)),I_{{v}^{-}}+g_2(I_{{v}^{-}},S_{{v}^{-}},z),S_{{v}^{-}}-g_1(S_{{v}^{-}},I_{{v}^{-}},z))\nonumber\\
&-\psi(P_{{v}^{-}},I_{{v}^{-}},S_{{v}^{-}}). \label{gpsi} 
\end{align}\\
We now calculate the infinitesimal generator of the process $(P,I,S)$. 
Using It\^{o}'s formula,
\begin{align}
&\psi(P_t,I_t,S_t) - \psi(P_0,I_0,S_0) = \int_{0}^{t}\diffp{\psi}{s}(P_{v^-},I_{v^-},S_{v^-})dv+\int_{0}^{t} \int_{\mathbb{R}}G^\psi(v,z)\wp(dz,dv).\label{genpis1}
\end{align}
Again,
\begin{align}
&\int_{0}^{t} \int_{\mathbb{R}}G^\psi(v,z)dz\;dv \nonumber\\
&=\int_{0}^{t} \int_{\mathbb{R}}\bigg(\psi\bigg(P_{{v}^{-}}(1 + \sum_{j \not\equiv I_{{v}^{-}}} \delta \alpha(j)1_{H_{I_{v^{-}}j}(S_{{v}^{-}})}(z)), I_{{v}^{-}} + \sum_{j \not\equiv I_{{v}^{-}}} (j-I_{{v}^{-}})1_{H_{I_{v^{-}}j}(S_{{v}^{-}})}(z)\nonumber\\ 
& \qquad\qquad\qquad\qquad\qquad\qquad\qquad \qquad\qquad S_{{v}^{-}}(1- 1_{H(S_{{v}^{-}})}(z))\bigg)-\psi(P_{{v}^{-}},I_{{v}^{-}},S_{{v}^{-}})\bigg)dz\;dv\nonumber\\
&=\int_{0}^{t} \sum_{j\not\equiv I_{{v}^{-}}}h_{I_{v^{-}}j}(S_{{v}^{-}})(\psi(P_{{v}^{-}}(1+\delta\alpha(j)),j,0)-\psi(P_{{v}^{-}},I_{{v}^{-}},S_{{v}^{-}}))dv\label{genpis2}
\end{align}
is finite using the boundedness of $\psi$ and assumption (A2). Hence we can rewrite the last term of \eqref{genpis1} as 
\begin{equation}
\int_{0}^{t} \int_{\mathbb{R}}G^\psi(v,z)\wp(dz,dv) =  \int_{0}^{t} \int_{\mathbb{R}}G^\psi(v,z)\tilde{\wp}(dz,dv) + \int_{0}^{t} \int_{\mathbb{R}}G^\psi(v,z)dz\;dv \label{genpis3}    
\end{equation}
where $\tilde{\wp}$ is the compensated Poisson random measure. Using \eqref{genpis1} - \eqref{genpis3}, we infer that the infinitesimal generator $\mathscr{A}$ of the process $(P,I,S)$ is given by
\begin{equation}
\mathscr{A}\psi(p,i,s) = \diffp{\psi}{s}(p,i,s)+\sum_{j\not\equiv i}h_{ij}(s)(\psi(p(1+\delta\alpha(j)),j,0)-\psi(p,i,s)).\label{genpis}
\end{equation}

\section{An applicatory non-homogeneous terminal value problem}
\mbox\\~\\
For a Borel subset $B$ of $\mathbb{R}^d$, let $C(B)$ denote the class of all real-valued continuous maps on $B$. %c^(1,1)(D) definition to be added
Let $\mathcal{D} := [0,T] \times [0,\infty) \times \mathcal{X} \times [0,\infty)$ and ${\mathcal{D}}^{\mathrm{o}} := (0,T) \times (0,\infty) \times \mathcal{X} \times (0,\infty)$. Let $\mathcal{D}_{t,s}$ denote the operator given by $\mathcal{D}_{t,s} = \left(\diffp{}{t}+\diffp{}{s}\right)$, and $cl(\mathcal{D}_{t,s})$ denote the class of functions in $C\mathcal{(D^{\mathrm{o}})}$ which are in the domain of the map $\mathcal{D}_{t,s}$. Consider the following set $V$ where
\[\quad V := \left\{\phi \in C\mathcal{(D)} \bigg\vert \sup\limits_{(t,p,i,s)\in\mathcal{D}}\frac{|\phi(t,p,i,s)|}{1+p}<\infty\right\}.\]
$(V,{||.||}_V)$ is a Banach space where for $\phi \in V$, ${||\phi||}_V = \sup\limits_{(t,p,i,s)\in\mathcal{D}}\frac{|\phi(t,p,i,s)|}{1+p}$. %reference can be added 
 
\begin{theorem}\label{thm1}
Consider the map $\mathcal{T}$ on $V$, given by
\begin{align}\label{thm1eq1}
\mathcal{T}\phi(t,p,i,s)&:=g(p)\frac{1-F(T-t+s)}{1-F(s)}\nonumber\\
&+\int_{t}^{T}\frac{f(v-t+s)}{1-F(s)}\sum_{j \not\equiv i}{p_{ij}(v-t+s)\phi(v,p(1+\delta\alpha(j)),j,0)}dv\nonumber\\
&+\int_{t}^{T}\frac{1-F(v-t+s)}{1-F(s)}w(v,p,i,v-t+s)dv
\end{align}
for each $\phi\in V$, where $w\in V$ and $g$ is a continuous function on $[0,\infty)$ with at most linear growth. 
Then the following hold.
\begin{enumerate}
\item $\mathcal{T}$ maps $V$ to $V$.
\item $\mathcal{T}$ is a contraction.
\item The following equation has a unique solution in $V$
\begin{align} \label{thm1eq2}
\mathcal{T}\phi = \phi.
\end{align} 
\item $\mathcal{T}\phi\mid_{\mathcal{D^{\mathrm{o}}}} \in cl(\mathcal{D}_{t,s})$.
\item If $\diffp{w}{s}$ exists and belongs to $C({\mathcal{D}}^{\mathrm{o}})$ then $\mathcal{T}\phi\mid_{\mathcal{D^{\mathrm{o}}}} \in C^{1,1}\mathcal{(D^{\mathrm{o}})}$.
\end{enumerate}
\end{theorem}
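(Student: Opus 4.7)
For part (1), I would first establish continuity of $\mathcal{T}\phi$ on $\mathcal{D}$: the integrands in (\ref{thm1eq1}) are continuous in $(t,p,s)$ (by (A1) and continuity of $\phi,w,g$), $1-F(s)>0$ at every finite $s$, and they admit continuous local dominants, so dominated convergence gives joint continuity (continuity in the discrete $i$ is automatic). For the $V$-norm estimate I would split $\mathcal{T}\phi$ into its three summands and use three elementary bounds: $(1-F(T-t+s))/(1-F(s))\le 1$; the identity $\int_t^T f(v-t+s)/(1-F(s))\,dv = (F(T-t+s)-F(s))/(1-F(s))\le 1$; and $\sum_{j\not\equiv i}p_{ij}(\cdot)=1$. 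Together with the elementary inequality $(1+p(1+\delta\alpha(j)))/(1+p)\le 1+\delta$ for $p\ge 0$, these give $\|\mathcal{T}\phi\|_V \le C_g + (1+\delta)\|\phi\|_V + T\|w\|_V$, where $C_g$ is the linear growth constant of $g$.

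For parts (2) and (3), only the middle integral depends on $\phi$, so $\mathcal{T}\phi-\mathcal{T}\psi = K(\phi-\psi)$ for a bounded linear $K$ on $V$. The crucial pointwise inequality is
\[
\frac{f(v-t+s)}{1-F(s)} \;=\; h(v-t+s)\,\exp\!\Bigl(-\!\int_s^{v-t+s}\!h(u)\,du\Bigr) \;\le\; c_1,
\]
and iterating the resulting Volterra estimate yields $\|K^n(\phi-\psi)\|_V \le \tfrac{((1+\delta)c_1T)^n}{n!}\,\|\phi-\psi\|_V$, which vanishes as $n\to\infty$. Hence $\mathcal{T}^n$ is a strict contraction for some $n$, and Banach's fixed point theorem (applied to $\mathcal{T}^n$ and then lifted to $\mathcal{T}$) produces the unique $\phi^\ast\in V$ solving (\ref{thm1eq2}). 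The main obstacle is precisely this step: the factor $1+\delta>1$ blocks a one-shot contraction in $\|\cdot\|_V$, so one either iterates as above or passes to the equivalent norm $\|\phi\|_\lambda := \sup|\phi|/((1+p)e^{\lambda(T-t)})$, under which $\mathcal{T}$ itself contracts strictly for any $\lambda>(1+\delta)c_1$.

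For parts (4) and (5), the essential manoeuvre is the change of variable $u=v-t+s$ in each integral, after which the integrals run over $u\in[s,T-t+s]$ with integrands containing $\phi(u+t-s,p(1+\delta\alpha(j)),j,0)$ and $w(u+t-s,p,i,u)$. Because $u+t-s$ and the upper limit $T-t+s$ are both invariant along the flow generated by $\partial_t+\partial_s$, the operator $\mathcal{D}_{t,s}$ never differentiates $\phi$ or $w$; it only acts on the lower limit $u=s$ (producing a boundary term) and on $s\mapsto 1/(1-F(s))$ (producing a factor $f(s)/(1-F(s))=h(s)$). A direct Leibniz computation then delivers the continuous identity
\[
\mathcal{D}_{t,s}(\mathcal{T}\phi)(t,p,i,s) \;=\; h(s)\,\mathcal{T}\phi(t,p,i,s) \;-\; \sum_{j\not\equiv i} h_{ij}(s)\,\phi(t,p(1+\delta\alpha(j)),j,0) \;-\; w(t,p,i,s),
\]
which proves (4) and exhibits the consistency with the generator (\ref{genpis}): at the fixed point this becomes $\partial_t\phi^\ast + \mathscr{A}\phi^\ast + w = 0$ with terminal value $g$. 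For (5), the individual derivatives $\partial_t$ and $\partial_s$ do land on the fourth argument of $w(v,p,i,v-t+s)$, which is exactly why the extra hypothesis $\partial w/\partial s\in C(\mathcal{D}^{\mathrm{o}})$ is needed; all other differentiations are justified by (A1), so $\partial_t\mathcal{T}\phi$ and $\partial_s\mathcal{T}\phi$ exist and are continuous on $\mathcal{D}^{\mathrm{o}}$, yielding the claimed $C^{1,1}$ regularity.
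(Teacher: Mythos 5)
Your proposal is correct, and its overall architecture (three-term decomposition of $\mathcal{T}\phi$, continuity via dominated convergence, a fixed-point argument, then Leibniz differentiation) matches the paper's, but it diverges at two substantive points. First, the contraction: the paper claims a one-shot strict contraction in $\|\cdot\|_V$, bounding the kernel integral by $(F(T-t+s)-F(s))/(1-F(s))<1$ and absorbing $|\phi_1-\phi_2|(v,p(1+\delta\alpha(j)),j,0)/(1+p)$ directly into $\|\phi_1-\phi_2\|_V$; as you observe, the weighted norm actually produces the extra factor $(1+p(1+\delta\alpha(j)))/(1+p)$, which can be as large as $1+\delta$, and the paper's displayed inequality silently drops it. Your iterated Volterra estimate $\|K^n\|\le ((1+\delta)\,2c_1T)^n/n!$ (note the kernel bound is $f(v-t+s)/(1-F(s))=h(v-t+s)e^{-\int_s^{v-t+s}h(u)du}\le 2c_1$, not $c_1$, since (A2) bounds each $h_\nu$ separately — immaterial), or equivalently the exponentially weighted norm, is the standard way to make this step airtight, at the cost of invoking the $n$-th-iterate form of Banach's theorem; it buys rigor where the paper's one-line estimate has a gap. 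Second, for assertions (4)--(5) the paper computes difference quotients separately for $\mathcal{T}_1$, $\mathcal{T}_2^j$, $\mathcal{T}_3$ — individual partials for the first two, and the diagonal increment $(\mathcal{T}_3(t+\varepsilon,p,i,s+\varepsilon)-\mathcal{T}_3(t,p,i,s))/\varepsilon$ for the third — whereas your substitution $u=v-t+s$ packages the same observation (only the lower limit $u=s$ and the prefactor $1/(1-F(s))$ move along the flow of $\partial_t+\partial_s$) into one computation; your closed-form identity $\mathcal{D}_{t,s}(\mathcal{T}\phi)=h(s)\,\mathcal{T}\phi-\sum_{j\not\equiv i}h_{ij}(s)\phi(t,p(1+\delta\alpha(j)),j,0)-w$ is exactly what the paper assembles from \eqref{d1ts}, \eqref{d2ts} and \eqref{d3ts} in the proof of Theorem \ref{thm2}, and it sidesteps some sign slips in the paper's intermediate formulas. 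Your diagnosis of why $w_s$ is needed only for the separate partials of $\mathcal{T}_3$, and not for the directional derivative, coincides with the paper's.
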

\begin{proof} 
For fixed $\phi \in V$ and $(t,p,i,s)\in \mathcal{D}$ we first rewrite 
\begin{equation}
\mathcal{T}\phi(t,p,i,s)=\bigg({\mathcal{T}}_{1}+\sum_{j \not\equiv i}{\mathcal{T}}_{2}^j+{\mathcal{T}}_{3}\bigg)(t,p,i,s) \label{T1T2T3}    
\end{equation}\
where
\begin{align}
&{\mathcal{T}}_{1}(t,p,i,s)= g(p)\frac{1-F(T-t+s)}{1-F(s)},\label{T1} \\
&{\mathcal{T}}_{2}^j(t,p,i,s)=\int_{t}^{T}\frac{(fp_{ij})(v-t+s)}{1-F(s)}\phi(v,p(1+\delta\alpha(j)),j,0)dv,\label{T2}\\
&{\mathcal{T}}_{3}(t,p,i,s)=\int_{t}^{T}\frac{1-F(v-t+s)}{1-F(s)}w(v,p,i,v-t+s)dv.\label{T3}
\end{align}~\\
\textbf{Continuity of $\mathcal{T}\phi$}\\~\\
We note that since $F$ and $g$ are continuous functions, ${\mathcal{T}}_{1}$ is continuous.\\
Next we prove ${\mathcal{T}}_{2}^{j}$ is continuous. First we check the continuity of ${\mathcal{T}}_{2}^{j}$ with respect to $t$-variable. Now for $(t,p,i,s)\in \mathcal{D}$ and $\varepsilon \in (-t,T-t),\; \varepsilon \neq 0$, consider
\begin{align} 
&{\mathcal{T}}_{2}^{j}(t+\varepsilon,p,i,s)-{\mathcal{T}}_{2}^{j}(t,p,i,s)\nonumber\\
&=\int_{t+\varepsilon}^{T}\frac{(fp_{ij})(v-t-\varepsilon+s)}{1-F(s)}\phi(v,p(1+\delta\alpha(j)),j,0)dv\nonumber \\
&-\int_{t}^{T}\frac{(fp_{ij})(v-t+s)}{1-F(s)}\phi(v,p(1+\delta\alpha(j)),j,0)dv\nonumber\\
&=\frac{1}{1-F(s)}\bigg(\int_{t+\varepsilon}^{T}((fp_{ij})(v-t-\varepsilon+s)-(fp_{ij})(v-t+s))\phi(v,p(1+\delta\alpha(j)),j,0)dv \nonumber\\ 
&\quad -\int_{t}^{t+\varepsilon}(fp_{ij})(v-t+s)\phi(v,p(1+\delta\alpha(j)),j,0)dv\bigg). \label{c2t}
\end{align}
Using the fact that $f,p_{ij}$ and $\phi(\cdot, p(1+\delta\alpha(j)), j,0)$ are continuous and the domain of integration is contained in a compact set, therefore using bounded convergence theorem we get
\begin{align*}
&\lim_{\varepsilon\rightarrow 0} {\mathcal{T}}_{2}^{j}(t+\varepsilon,p,i,s)-{\mathcal{T}}_{2}^{j}(t,p,i,s)=0.  
\end{align*}
Hence the continuity of ${\mathcal{T}}_{2}^{j}$ with respect to $t$-variable has been proved.\\~\\
Next we prove the continuity of ${\mathcal{T}}_{2}^{j}$ with respect to $s$-variable. For $(t,p,i,s)\in \mathcal{D}$ and $\varepsilon \in (-s,s), \varepsilon \neq 0$, consider
\begin{align}
&{\mathcal{T}}_{2}^{j}(t,p,i,s+\varepsilon)-{\mathcal{T}}_{2}^{j}(t,p,i,s)=\nonumber\\
&\int_{t}^{T}\bigg(\frac{(fp_{ij})(v-t+s+\varepsilon)}{1-F(s+\varepsilon)}-\frac{(fp_{ij})(v-t+s)}{1-F(s)}\bigg)\phi(v,p(1+\delta\alpha(j)),j,0))dv.\label{c2s}  
\end{align}
Since $f$, $p_{ij}$, $F$ and $\phi(\cdot, p(1+\delta\alpha(j)), j,0)$ are continuous, proceeding similarly as above, we see that
\begin{equation}
\lim_{\varepsilon\rightarrow 0}{\mathcal{T}}_{2}^{j}(t,p,i,s+\varepsilon)-{\mathcal{T}}_{2}^{j}(t,p,i,s)= 0. \nonumber
\end{equation}
We have proved that ${\mathcal{T}}_{2}^{j}$ is continuous with respect to $s$-variable.
We now prove the continuity of ${\mathcal{T}}_{2}^{j}$ with respect to $p$-variable. Again for $(t,p,i,s)\in \mathcal{D}$ and $\varepsilon \in (-p,p),\; \varepsilon\neq0$, we have
\begin{align*}
&{\mathcal{T}}_{2}^{j}(t,p+\varepsilon,i,s)-{\mathcal{T}}_{2}^{j}(t,p,i,s)\\
&=\int_{t}^{T}\frac{(fp_{ij})(v-t+s)}{1-F(s)}(\phi(v,(p+\varepsilon)(1+\delta\alpha(j)),j,0)-\phi(v,p(1+\delta\alpha(j)),j,0))dv.
\end{align*}
Since $\phi$ is continuous, there is a neighborhood $U_p$ of $p$ such that $\phi$ is bounded on the set $[t,T]\times (1+\delta\alpha(j))U_p \times \{j\}\times \{0\}$. Again as $f$ and $p_{ij}$ are also continuous and the integration is over a compact set, using bounded convergence theorem we have
\begin{equation}
\lim_{\varepsilon\rightarrow 0}{\mathcal{T}}_{2}^{j}(t,p+\varepsilon,i,s)-{\mathcal{T}}_{2}^{j}(t,p,i,s)= 0.\nonumber
\end{equation}
Thus, we have proved the continuity of ${\mathcal{T}}_{2}^{j}$ with respect to $p$-variable.\\~\\
Summarising all the three cases we conclude ${\mathcal{T}}_{2}^{j} \in C\mathcal{(D)}$. Since $F$ and $f$ are continuous maps, continuity of ${\mathcal{T}}_{3}$ can be proved in the similar way as was the case in proving continuity of ${\mathcal{T}}_{2}^{j}$. Since the continuity of $\mathcal{T}_{1},{\mathcal{T}}_{2}^{j}$ and $\mathcal{T}_{3}$ has been established, therefore using \eqref{T1T2T3} we conclude that  $\mathcal{T}\phi$ is continuous.\\~\\
Now, in order to prove the assertion (1),
We note that for the case $\phi=0$, for all $(t,p,i,s) \in \mathcal{D}$
\begin{align*}
&\frac{\vert\mathcal{T}0(t,p,i,s)\vert}{1+p}\\
&=\bigg\vert\frac{g(p)}{1+p}\left(\frac{1-F(T-t+s)}{1-F(s)}\right) +\frac{1}{1+p}\left(\int_{t}^{T}\frac{1-F(v-t+s)}{1-F(s)}f(v,p,i,v-t+s)dv\right)\bigg\vert\\
&\leq \frac{\vert g(p)\vert}{1+p}+T{\Vert f\Vert}_{V}
\end{align*}
using the fact that $F$ is non-decreasing.\\ 
Since $g$ has at most linear growth, we have
$\sup\limits_{(t,p,i,s)\in\mathcal{D}}\frac{\vert\mathcal{T}0(t,p,i,s)\vert}{1+p}<\infty$. Hence, $\mathcal{T}0 \in V$.\\ 
Now for any $\phi_1, \phi_2 \in V$ and $(t,p,i,s) \in \mathcal{D}$,
\begin{align*}
&\frac{\vert\mathcal{T}\phi_1(t,p,i,s)-\mathcal{T}\phi_2(t,p,i,s)\vert}{1+p}=\\
&\bigg\vert\int_{0}^{T-t}\sum_{j \not\equiv i}{p_{ij}(v+s)\frac{(\phi_1(t+v,p(1+\delta\alpha(j)),j,0)-\phi_2(t+v,p(1+\delta\alpha(j)),j,0))}{1+p}\frac{f(v+s)}{1-F(s)}dv}\bigg\vert\\
&\leq \int_{0}^{T-t}\sum_{j \not\equiv i}{p_{ij}(v+s)\frac{|\phi_1(t+v,p(1+\delta\alpha(j)),j,0)-\phi_2(t+v,p(1+\delta\alpha(j)),j,0)|}{1+p}\frac{f(v+s)}{1-F(s)}dv}\\
&\leq{\Vert\phi_1-\phi_2\Vert}_{V}\int_{0}^{T-t}\frac{f(v+s)}{1-F(s)}dv\\
&={\Vert\phi_1-\phi_2\Vert}_{V}\left(\frac{F(T-t+s)-F(s)}{1-F(s)}\right)\\
&<{\Vert\phi_1-\phi_2\Vert}_{V}
\end{align*}
since $F$ is non-decreasing and $F(\cdot)<1$. Hence, 
\begin{equation}
\sup\limits_{(t,p,i,s)\in\mathcal{D}}\frac{\vert\mathcal{T}\phi_1(t,p,i,s)-\mathcal{T}\phi_2(t,p,i,s)\vert}{1+p} < {\Vert\phi_1-\phi_2\Vert}_{V}. \label{contrac}   
\end{equation}
For each $\phi\in V$, the above inequality and continuity of $\mathcal{T}\phi$ imply that $\mathcal{T}\phi-\mathcal{T}0$ is in $V$. Again, as $V$ is a linear space and $\mathcal{T}0$ is in $V$, $\mathcal{T}\phi$ is also in $V$ for each $\phi \in V$.
Therefore, we have shown $\mathcal{T}$ maps $V$ to $V$. Then the equation \eqref{contrac} implies that $\mathcal{T}$ is a contraction. Hence the assertions (1) are (2) are proved.  As $V$ is a Banach space and a solution of \eqref{thm1eq2} is a fixed point of the contraction map $\mathcal{T}$ on $V$, using Banach Fixed Point theorem we conclude that the equation \eqref{thm1eq2} has a unique solution in $V$. This completes the proof of assertion (3).\\~\\ \textbf{$\mathcal{T}\phi$ belongs to $cl(\mathcal{D}_{t,s})$.}\\~\\
We first prove ${\mathcal{T}}_{1}$ belongs to $C^{1,1}(\mathcal{D})$ and hence also is in $cl(\mathcal{D}_{t,s})$.
Now, since $F$ is continuously differentiable, we have for $(t,p,i,s)\in {\mathcal{D}^{\mathrm{o}}}$,
\begin{align}
&\diffp{{\mathcal{T}}_{1}}{t}(t,p,i,s) = g(p)\frac{f(T-t+s)}{1-F(s)}, \label{d1t}\\
&\diffp{{\mathcal{T}}_{1}}{s}(t,p,i,s) = g(p)\left(-\frac{f(T-t+s)}{1-F(s)}+\frac{f(s)(1-F(T-t+s))}{{(1-F(s))}^2}\right).\label{d1s}
\end{align}
The continuous differentiability of $F$ and the continuity of $g$ again imply that $\diffp{{\mathcal{T}}_{1}}{t}$ and $\diffp{{\mathcal{T}}_{1}}{s}$ are continuous, hence ${\mathcal{T}}_1$ belongs to $C^{1,1}(\mathcal{D})$ and hence belongs to $cl(\mathcal{D}_{t,s})$ as well Moreover, $\mathcal{D}_{t,s}\mathcal{T}_1$ is given by 
\begin{equation}
\mathcal{D}_{t,s}\mathcal{T}_1(t,p,i,s)=g(p)\frac{f(s)(1-F(T-t+s))}{{(1-F(s))}^2}\label{d1ts}
\end{equation}\\
where $(t,p,i,s) \in {\mathcal{D}^{\mathrm{o}}}.$\\~\\
Next we prove ${\mathcal{T}}_{2}^{j}$  belongs to $C^{1,1}(\mathcal{D})$ and thus belongs to class $cl(\mathcal{D}_{t,s})$. For $(t,p,i,s)\in \mathcal{D}^{\mathrm{o}}$, $\varepsilon \in (-t, T-t), \; \varepsilon \neq 0$, and then using \eqref{c2t} we have
\begin{align*}
&\frac{{\mathcal{T}}_{2}^{j}(t+\varepsilon,p,i,s)-{\mathcal{T}}_{2}^{j}(t,p,i,s)}{\varepsilon} \\  &=\frac{1}{1-F(s)}\bigg(\int_{t+\varepsilon}^{T}\bigg(\frac{(fp_{ij})(v-t-\varepsilon+s)-(fp_{ij})(v-t+s)}{\varepsilon}\bigg)\phi(v,p(1+\delta\alpha(j)),j,0)dv\\
&-\frac{1}{\varepsilon}\int_{t}^{t+\varepsilon}(fp_{ij})(v-t+s)\phi(v,p(1+\delta\alpha(j)),j,0)dv\bigg).
\end{align*}
Proceeding similarly as in the case of proving continuity with respect to $t$ along with the application of Fundamental theorem of calculus for second term of the above equation, we get
\begin{align}
&\lim_{\varepsilon\rightarrow 0}\frac{{\mathcal{T}}_{2}^{j}(t+\varepsilon,p,i,s)-{\mathcal{T}}_{2}^{}(t,p,i,s)}{\varepsilon}=\nonumber\\
&-\frac{1}{1-F(s)}\int_{t}^{T}{(fp_{ij})}^{'}(v-t+s)\phi(v,p(1+\delta\alpha(j)),j,0)dv+\frac{(fp_{ij})(s)\phi(t,p(1+\delta\alpha(j)),j,0)}{1-F(s)}\label{d2t}.
\end{align}
Therefore, partial derivative of ${\mathcal{T}}_{2}^{j}$ with respect to $t$-variable exists on $\mathcal{D}^{\mathrm{o}}$ and it is given by \eqref{d2t}.\\~\\
Again for $(t,p,i,s)\in \mathcal{D}^{\mathrm{o}}$, $\varepsilon \in (-s, s), \; \varepsilon \neq 0$, using \eqref{c2s} we obtain
\begin{align*}
&\frac{{\mathcal{T}}_{2}^{j}(t,p,i,s+\varepsilon)-{\mathcal{T}}_{2}^{j}(t,p,i,s)}{\varepsilon}\\
&=\int_{t}^{T}\frac{1}{\varepsilon}\bigg(\frac{(fp_{ij})(v-t+s+\varepsilon)}{1-F(s+\varepsilon)}-\frac{(fp_{ij})(v-t+s)}{1-F(s)}\bigg)\phi(v,p(1+\delta\alpha(j)),j,0))dv.    
\end{align*}
Proceeding similar to the case of proving continuity for 
${\mathcal{T}}_{2}^{j}$ with respect to $s$, we infer
\begin{align}
&\lim_{\varepsilon\rightarrow 0}\frac{{\mathcal{T}}_{2}^{j}(t,p,i,s+\varepsilon)-{\mathcal{T}}_{2}^{}(t,p,i,s)}{\varepsilon}\nonumber\\ &=\int_{t}^{T}\bigg(\frac{{(fp_{ij})}^{'}(v-t+s)(1-F(s))+(fp_{ij})(v-t+s))f(s)}{{(1-F(s))}^2}\bigg)\phi(v,p(1+\delta\alpha(j)),j,0)dv.\label{d2s}
\end{align}
Therefore, partial derivative of ${\mathcal{T}}_{2}^{j}$ with respect to $s$ exists on $\mathcal{D}^{\mathrm{o}}$ and it is given by \eqref{d2s}.\\~\\
Further since $F, p_{ij}, f$ are continuously differentiable functions and $\phi$ is also continuous, proceeding as in the case of proving continuity for 
${\mathcal{T}}_{2}^{j}$, we conclude that $\diffp{{\mathcal{T}}_{2}^{j}}{t},\diffp{{\mathcal{T}}_{2}^{j}}{s} \in C{(\mathcal{D}}^{\mathrm{o}})$. Hence it has been proved ${\mathcal{T}}_{2}^{j} \in C^{1,1}(\mathcal{D})$. Thus it also lies in class $cl(\mathcal{D}_{t,s})$ and using \eqref{d2t} and \eqref{d2s}, we see that for $(t,p,i,s) \in {\mathcal{D}^{\mathrm{o}}}$,
\begin{align}
\mathcal{D}_{t,s}\mathcal{T}_2^j(t,p,i,s)&=-\frac{(fp_{ij})(s)\phi(t,p(1+\delta\alpha(j)),j,0)}{1-F(s)}\nonumber\\
&-\int_{t}^{T}\frac{(fp_{ij})(v-t+s))f(s)}{{(1-F(s))}^2}\phi(v,p(1+\delta\alpha(j)),j,0)dv.\label{d2ts}   
\end{align}
We further prove ${\mathcal{T}}_{3}$ belongs to class $cl(\mathcal{D}_{t,s})$. For this consider $(t,p,i,s) \in {\mathcal{D}}^{\mathrm{o}}$, $\varepsilon \in (-s,s)\cap(-t,T-t), \varepsilon \neq 0,$
\begin{align*}
&\frac{\mathcal{T}_3(t+\varepsilon,p,i,s+\varepsilon)-\mathcal{T}_3(t,p,i,s)}{\varepsilon}=\frac{1}{\varepsilon}\bigg(\int_{t+\varepsilon}^{T}\frac{1-F(v-t+s)}{1-F(s+\varepsilon)}w(v,p,i,v-t+s)dv\\
&-\int_{t}^{T}\frac{1-F(v-t+s)}{1-F(s)}w(v,p,i,v-t+s)dv\bigg)\\
&=\int_{t+\varepsilon}^{T}\frac{1}{\varepsilon}\left(\frac{1}{1-F(s+\varepsilon)}-\frac{1}{1-F(s)}\right)(1-F(v-t+s))w(v,p,i,v-t+s)dv\\
&-\frac{1}{\varepsilon}\int_{t}^{t+\varepsilon}\frac{1-F(v-t+s)}{1-F(s)}w(v,p,i,v-t+s)dv.
\end{align*}
Since F is differentiable, $w(\cdot,p,i,\cdot)$ is continuous and the integration is over a compact set, using bounded convergence theorem and Fundamental theorem of calculus, we have
\begin{align}
&\lim_{\varepsilon\rightarrow 0}\frac{\mathcal{T}_3(t+\varepsilon,p,i,s+\varepsilon)-\mathcal{T}_3(t,p,i,s)}{\varepsilon}\nonumber\\
&=\int_{t}^{T} \frac{f(s)(1-F(v-t+s)}{{(1-F(s))}^2}w(v,p,i,v-t+s)dv - w(t,p,i,s).\label{d3ts}
\end{align}
Therefore, $\mathcal{D}_{t,s}\mathcal{T}_3$ exists and it is given by \eqref{d3ts}. We also note that the continuity of $f,F$ and $w$ implies that $\mathcal{D}_{t,s}\mathcal{T}_3$ is continuous and thus, $\mathcal{T}_3 \in cl(\mathcal{D}_{t,s})$.\par
Since we have proved $\mathcal{T}_3,\mathcal{T}_2^j$ and $\mathcal{T}_3$ belong to $cl(\mathcal{D}_{t,s})$, therefore using \eqref{T1T2T3} we conclude $\mathcal{T}\phi \in cl(\mathcal{D}_{t,s})$, thus proving assertion (4).\\~\\
Now for proving assertion (5), we assume $w_s:= \diffp{w}{s}$ belongs to $C({\mathcal{D}}^{\mathrm{o}})$. Let $(t,p,i,s)\in \mathcal{D}^{\mathrm{o}}$, $\varepsilon \in (-t, T-t), \; \varepsilon \neq 0$, then
\begin{align*}
&\frac{\mathcal{T}_3(t+\varepsilon,p,i,s)-\mathcal{T}_3(t,p,i,s)}{\varepsilon}=\frac{1}{\varepsilon}\bigg(\int_{t+\varepsilon}^{T}\frac{1-F(v-t-\varepsilon+s)}{1-F(s)}w(v,p,i,v-t-\varepsilon+s)dv\\
&-\int_{t}^{T}\frac{1-F(v-t+s)}{1-F(s)}w(v,p,i,v-t+s)dv\bigg)\\   
&=\frac{1}{1-F(s)}\bigg(\int_{t+\varepsilon}^{T}\frac{1}{\varepsilon}\bigg((1-F(v-t-\varepsilon+s))w(v,p,i,v-t-\varepsilon+s)-\\
&(1-F(v-t+s))w(v,p,i,v-t+s)dv\bigg)-\frac{1}{\varepsilon}\int_{t}^{t+\varepsilon}(1-F(v-t+s))w(v,p,i,v-t+s)dv\bigg).
\end{align*}
Since $F$ is differentiable and we have assumed the existence of $\diffp{w}{s}$ in this case, then using the similar arguments as in the case of proving the existence of $\diffp{\mathcal{T}_2^j}{t}$, we obtain
\begin{align}
&\lim_{\varepsilon\rightarrow 0}\frac{\mathcal{T}_3(t+\varepsilon,p,i,s)-\mathcal{T}_3(t,p,i,s)}{\varepsilon}=\nonumber\\
&\int_{t}^{T}\frac{f(v-t+s)w(v,p,i,v-t+s)-(1-F(v-t+s))w_s(v,p,i,v-t+s)}{1-F(s)}dv-w(t,p,i,s).\label{d3t}
\end{align}
Therefore $\diffp{\mathcal{T}_3}{t}$ exists on $\mathcal{D}^{\mathrm{o}}$ if $w_s$ exists on $\mathcal{D}^{\mathrm{o}}$ and it is given by \eqref{d3t}. Now again for $(t,p,i,s)\in \mathcal{D}^{\mathrm{o}}$, $\varepsilon \in (-s, s), \; \varepsilon \neq 0$, consider 
\begin{align*}
\frac{\mathcal{T}_3(t,p,i,s+\varepsilon)-\mathcal{T}_3(t,p,i,s)}{\varepsilon}&=\int_{t}^{T}\frac{1}{\varepsilon}\bigg(\frac{(1-F(v-t+s+\varepsilon))w(v,p,i,v-t+s+\varepsilon)}{1-F(s+\varepsilon)}\\
&-\frac{(1-F(v-t+s))w(v,p,i,v-t+s)}{1-F(s)}\bigg)dv. 
\end{align*}
Again since $F$ is differentiable and $w_s$ exists, we have
\begin{align}
&\lim_{\varepsilon\rightarrow 0}\frac{\mathcal{T}_3(t,p,i,s+\varepsilon)-\mathcal{T}_3(t,p,i,s)}{\varepsilon}\nonumber\\
&=\int_{t}^{T}\bigg(\frac{(1-F(v-t+s))w_s(v,p,i,v-t+s)-f(v-t+s)w(v,p,i,v-t+s)}{1-F(s)}\nonumber\\
&\qquad+\frac{f(s)F(v-t+s))w(v,p,i,v-t+s)}{{(1-F(s))}^2}\bigg)dv. \label{d3s}
\end{align}
Hence $\diffp{\mathcal{T}_3}{s}$ exists on $\mathcal{D}^{\mathrm{o}}$ and it is given by \eqref{d3s}.
Moreover, the continuity of $f,w$ and $w_s$ ensures that $\diffp{\mathcal{T}_3}{t}$ and $\diffp{\mathcal{T}_3}{t}$ are continuous, using the same arguments as we had in the case of proving continuity of $\mathcal{T}_2^j$. Thus, $\mathcal{T}_3 \in C^{1,1}\mathcal{(D)}.$\\
As we have proved $\mathcal{T}_1, \mathcal{T}_2^j$ and $\mathcal{T}_3$ (if we assume $w_s$ exists and is continuous), therefore $\mathcal{T}\phi \in C^{1,1}\mathcal{(D)}$ in this case. This completes the proof of assertion (5).
\end{proof}
\vspace{3mm}
\begin{theorem}\label{thm2}
Consider the following partial differential equation with the terminal condition as follows.
\begin{align}
\diffp{\phi}{t}+\diffp{\phi}{s}+\sum_{j\not \equiv i}h_{ij}(s)(\phi(t,&p(1+\delta \alpha(j)),j,0)-\phi(t,p,i,s))+w(t,p,i,s) = 0 \label{thm2eq1}\\
&\phi(T,p,i,s)=g(p) \label{thm2eq2}
\end{align}
where $(t,p,i,s) \in {\mathcal{D}^{\mathrm{o}}}$ and $w$, $g$ are as in Theorem 3.1. Then the equation \eqref{thm2eq1} with the terminal condition \eqref{thm2eq2} has a unique solution in the class of functions $V \cap cl(\mathcal{D}_{t,s})$. Further if $w$ has continuous partial derivative with respect to $s$ then the above said solution belongs to $V\cap C^{1,1}\mathcal{(D)}$.
\end{theorem}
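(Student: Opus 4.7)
The plan is to reduce Theorem \ref{thm2} to Theorem \ref{thm1} by showing that on $V\cap cl(\mathcal{D}_{t,s})$ the terminal-value problem \eqref{thm2eq1}--\eqref{thm2eq2} is equivalent to the fixed-point equation $\mathcal{T}\phi=\phi$. The two identities that drive both directions are \eqref{hij}, namely $h_{ij}(s)=\frac{f(s)p_{ij}(s)}{1-F(s)}$, together with $\sum_{j\not\equiv i}h_{ij}(s)=h(s)=\frac{f(s)}{1-F(s)}$.

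For existence, I would take $\phi$ to be the unique fixed point of $\mathcal{T}$ in $V$ provided by Theorem \ref{thm1}(3); Theorem \ref{thm1}(4) already places it in $cl(\mathcal{D}_{t,s})$. The terminal condition is immediate from \eqref{thm1eq1} by setting $t=T$, since both integrals vanish and the $g$-prefactor equals $1$. For the PDE, I differentiate $\mathcal{T}\phi$ using the three formulas obtained in the proof of Theorem \ref{thm1} for $\mathcal{D}_{t,s}\mathcal{T}_1$, $\mathcal{D}_{t,s}\mathcal{T}_2^j$ and $\mathcal{D}_{t,s}\mathcal{T}_3$, rewrite the boundary term of each $\mathcal{D}_{t,s}\mathcal{T}_2^j$ as $-h_{ij}(s)\phi(t,p(1+\delta\alpha(j)),j,0)$ via \eqref{hij}, and factor $\frac{f(s)}{1-F(s)}$ out of the remaining non-boundary pieces; after this rearrangement the three pieces reconstitute the whole of $\mathcal{T}\phi$ and the identity
\begin{equation*}
\mathcal{D}_{t,s}\mathcal{T}\phi(t,p,i,s)=\frac{f(s)}{1-F(s)}\mathcal{T}\phi(t,p,i,s)-\sum_{j\not\equiv i}h_{ij}(s)\phi(t,p(1+\delta\alpha(j)),j,0)-w(t,p,i,s)
\end{equation*}
emerges. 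Substituting $\mathcal{T}\phi=\phi$ and $\frac{f(s)}{1-F(s)}=\sum_{j\not\equiv i}h_{ij}(s)$ converts the right-hand side into $-\sum_{j}h_{ij}(s)\bigl(\phi(t,p(1+\delta\alpha(j)),j,0)-\phi(t,p,i,s)\bigr)-w$, which is exactly \eqref{thm2eq1}.

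For uniqueness, suppose $\phi\in V\cap cl(\mathcal{D}_{t,s})$ satisfies \eqref{thm2eq1}--\eqref{thm2eq2}. I fix $(t,p,i,s)\in\mathcal{D}^{\mathrm{o}}$ and integrate along the characteristic $v\mapsto(v,p,i,v-t+s)$, on which $\mathcal{D}_{t,s}$ reduces to $\frac{d}{dv}$. With $\Phi(v):=\phi(v,p,i,v-t+s)$ the PDE becomes the linear first-order ODE
\begin{equation*}
\Phi'(v)-h(v-t+s)\Phi(v)=-\sum_{j\not\equiv i}h_{ij}(v-t+s)\phi(v,p(1+\delta\alpha(j)),j,0)-w(v,p,i,v-t+s)
\end{equation*}
on $[t,T]$ with terminal datum $\Phi(T)=g(p)$. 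Multiplying by the integrating factor $\mu(v)=\frac{1-F(v-t+s)}{1-F(s)}$, which satisfies $\mu'(v)=-h(v-t+s)\mu(v)$ and $\mu(t)=1$, and integrating from $t$ to $T$, the terminal datum together with $(1-F(y))h_{ij}(y)=f(y)p_{ij}(y)$ from \eqref{hij} reproduce precisely the formula \eqref{thm1eq1}, so $\phi=\mathcal{T}\phi$. By Theorem \ref{thm1}(3) the fixed point is unique in $V$, hence $\phi$ coincides with the function constructed in the existence step. Finally, when $\partial_{s}w\in C(\mathcal{D}^{\mathrm{o}})$, Theorem \ref{thm1}(5) upgrades the regularity of the fixed point to $V\cap C^{1,1}(\mathcal{D})$.

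The main technical step is the algebraic bookkeeping in the existence direction: collecting $\mathcal{D}_{t,s}\mathcal{T}_1$, $\sum_j\mathcal{D}_{t,s}\mathcal{T}_2^j$ and $\mathcal{D}_{t,s}\mathcal{T}_3$, isolating the boundary contribution from $\mathcal{T}_2^j$ that supplies the jump term of the generator via \eqref{hij}, and recognizing that the remaining integrals, each carrying the common factor $\frac{f(s)}{1-F(s)}$, assemble back into $\mathcal{T}\phi$. The uniqueness direction is essentially the same computation run in reverse through the characteristic ODE, and is largely automatic once the integrating factor $\frac{1-F(v-t+s)}{1-F(s)}$ is identified.
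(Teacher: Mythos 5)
Your existence argument coincides with the paper's: take the fixed point of $\mathcal{T}$ supplied by Theorem \ref{thm1}, read off the terminal condition at $t=T$, and combine the formulas for $\mathcal{D}_{t,s}\mathcal{T}_1$, $\mathcal{D}_{t,s}\mathcal{T}_2^j$, $\mathcal{D}_{t,s}\mathcal{T}_3$ with \eqref{hij} and $\sum_{j\not\equiv i}p_{ij}=1$. Your uniqueness argument, however, takes a genuinely different route. The paper proves uniqueness probabilistically: it applies It\^{o}'s formula to $N^\phi_t=\phi(t,P_t,I_t,S_t)+\int_0^t w\,dv$, shows via integrability estimates on $G^\phi$ (using (A2), (A5) and the bound $P_{v^-}\le P_0(1+\delta)^{N_\wp(T)}$) that the compensated Poisson integral is a true martingale, and derives the stochastic representation \eqref{storep}, which requires the full-support Lemma \ref{lemmaA1} (hence assumption (A6)) to pass from an almost-sure identity to all of $\mathcal{D}$. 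You instead integrate the PDE along the characteristic $v\mapsto(v,p,i,v-t+s)$ with the integrating factor $\frac{1-F(v-t+s)}{1-F(s)}$ and conclude that any solution in $V\cap cl(\mathcal{D}_{t,s})$ is a fixed point of $\mathcal{T}$, whence uniqueness follows from Theorem \ref{thm1}(3). This computation is correct --- the identity $(1-F)h_{ij}=fp_{ij}$ converts the integrated ODE into exactly \eqref{thm1eq1}, and one only needs to remark that $\phi=\mathcal{T}\phi$ is first obtained on $\mathcal{D}^{\mathrm{o}}$ and extends to $\mathcal{D}$ by continuity of both sides --- and it is more elementary and self-contained, dispensing with the martingale estimates, assumptions (A5)--(A6), and the appendix lemma. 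What it does not deliver is the Feynman--Kac representation \eqref{storep}, which the paper obtains as a by-product of its uniqueness proof and then uses in Corollary \ref{corollary3.1} and Theorem \ref{eta0} to identify $\pi$ and $u$ as conditional expectations; following your route, that representation would still have to be established separately.
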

\begin{proof}
We prove the existence of a solution of the equation \eqref{thm2eq1} with the terminal condition \eqref{thm2eq2} by showing that the solution of the equation \eqref{thm1eq2} belonging to $V \cap cl(D_{t,s})$ also satisfies \eqref{thm2eq1}-\eqref{thm2eq2}. Let us assume $\phi$ to be the unique solution of \eqref{thm1eq2} in $V$ which is given by \eqref{thm1eq1}. Then using \eqref{T1T2T3}, we have for $(t,p,i,s) \in {\mathcal{D}^{\mathrm{o}}}$,
\begin{align*}
&\diffp{\phi}{t}+\diffp{\phi}{s}=\mathcal{D}_{t,s}\left(\mathcal{T}_1+\sum_{j \not\equiv i}{\mathcal{T}}_{2}^j+{\mathcal{T}}_{3}\right)(t,p,i,s).
\end{align*}
Using \eqref{d1ts}, \eqref{d2ts} and \eqref{d3ts}, we obtain
\begin{align*}
&\diffp{\phi}{t}+\diffp{\phi}{s}=g(p)\frac{f(s)(1-F(T-t+s))}{{(1-F(s))}^2}\\
&-\sum_{j \not\equiv i}\bigg(\frac{(fp_{ij})(s)\phi(t,p(1+\delta\alpha(j)),j,0)}{1-F(s)}+\int_{t}^{T}\frac{(fp_{ij})(v-t+s))f(s)}{{(1-F(s))}^2}\phi(v,p(1+\delta\alpha(j)),j,0)dv\bigg)\\
&+\int_{t}^{T} \frac{f(s)(1-F(v-t+s)}{{(1-f(s))}^2}w(v,p,i,v-t+s)dv-w(t,p,i,s)\\
&=\frac{f(s)}{1-F(s)}\bigg(\frac{g(p)(1-F(T-t+s))}{1-F(s)}\\\
&+\int_{t}^{T}\frac{f(v-t+s)}{1-F(s)}\sum_{j \not\equiv i}p_{ij}(v-t+s)\phi(v,p(1+\delta\alpha(j)),j,0)dv\\
&+\int_{t}^{T} \frac{1-F(v-t+s}{1-F(s)}w(v,p,i,v-t+s)dv\bigg)-\sum_{j \not\equiv i}\frac{f(s)p_{ij}(s)}{1-F(s)}\phi(t,p(1+\delta\alpha(j)),j,0)-w(t,p,i,s)\\
&=\frac{f(s)}{1-F(s)}\mathcal{T}\phi(t,p,i,s)-\sum_{j \not\equiv i}\frac{f(s)p_{ij}(s)}{1-F(s)}\phi(t,p(1+\delta\alpha(j)),j,0)-w(t,p,i,s).
\end{align*}
Now using \eqref{hij}, \eqref{thm1eq2} and the fact that $\sum_{j \not\equiv i}p_{ij}(s)=1$, we have
\begin{align*}
&\diffp{\phi}{t}+\diffp{\phi}{s}+w(t,p,i,s)= \sum_{j \not\equiv i}h_{ij}(s)(\phi(t,p,i,s)-\phi(t,p(1+\delta\alpha(j)),j,0)).
\end{align*}
Also by plugging $t=T$ in \eqref{thm1eq1}, we get $\phi(T,p,i,s) = g(p)$. Since $\phi$ belongs to $V \cap cl(\mathcal{D}_{t,s})$, we conclude the partial differential equation \eqref{thm2eq1} with the terminal condition \eqref{thm2eq2} has a solution in the class of functions $V \cap cl(\mathcal{D}_{t,s})$.\\~\\
Now we prove the uniqueness of the solution. Let $\phi \in V \cap cl(D_{t,s})$ be a solution of \eqref{thm2eq1}-\eqref{thm2eq2}.\\
Then consider the process
\[N_t^\phi:= \phi(t,P_t,I_t,S_t)+\int_{0}^{t}w(v,P_{v^-},I_{v^-},S_{v^-})dv.\] 
Similar to \eqref{gpsi}, we define 
\begin{align*}
G^\phi(v,z):=&\phi(v,P_{{v}^{-}}+g_3(I_{{v}^{-}},S_{{v}^{-}},z),I_{{v}^{-}}+g_2(I_{{v}^{-}},S_{{v}^{-}},z),S_{{v}^{-}} + g_1(S_{{v}^{-}},I_{{v}^{-}},z))\\
&-\phi(v,P_{{v}^{-}},I_{{v}^{-}},S_{{v}^{-}}).
\end{align*}
Then proceeding as in \eqref{genpis1}-\eqref{genpis2}, we get
\begin{equation}
\int_{0}^{t}\int_{\mathbb{R}}G^\phi(v,z)dz\;dv=\int_{0}^{t} \sum_{j\not \equiv I_{{v}^{-}}}h_{I_{v^{-}}j}(S_{{v}^{-}})(\phi(v,P_{{v}^{-}}(1+\delta\alpha(j)),j,0)-\phi(v,P_{{v}^{-}},I_{{v}^{-}},S_{{v}^{-}}))dv. \label{integralgphi} 
\end{equation}
Since $\phi$ and $w$ are continuous and $\phi$ belongs to $\mathcal{D}_{t,s}$, using It\^{o}'s formula, we obtain
\begin{align*}
dN_t^\phi=&\bigg(w(t,P_{t^-},I_{t^-},S_{t^-})+\left(\diffp{}{t}+\diffp{}{s}\right)\phi(t,P_{t^-},I_{t^-},S_{t^-})+\\
&\sum_{j\neq I_{{t}^{-}}}h_{I_{{t}^{-}j}}(S_{{t}^{-}})(\phi(t,P_{{t}^{-}}(1+\delta\alpha(j)),j,0)-\phi(t,P_{{t}^{-}},I_{{t}^{-}},S_{{t}^{-}}))\bigg)dt+dM_t,
\end{align*}
where $M:=\{M_t\}_{t\ge 0}$ is given by the stochastic integral $M_t := \int_{0}^{t} \int_{\mathbb{R}}G^\phi(v,z)\Tilde{\wp}(dz,dv)$ for all $t\ge 0$.
The equation \eqref{thm2eq1} implies that $N_t^\phi=M_t+\phi(0,P_0,I_0,S_0).$ \\\\
Now since $\phi \in V$, $
\phi(v,p,i,s)\leq ||\phi||_V(1+p)$ for all $(v,p,i,s) \in \mathcal{D}.$\\
It follows from the above assertion, the equation \eqref{integralgphi} and the assumption (A2) that 
\begin{equation}
\int_{0}^{t}\int_{\mathbb{R}}|G^\phi(v,z)|dz\;dv \leq 4c_1\|\phi\|_V t+\int_{0}^{t}P_{v^-} dv. \label{Gphi}
\end{equation}
Let $T_n$ denote the time of $n\textsuperscript{th}$ jump of the process $(P,I,S)$. Let for $t>0$, $N(t)$ denotes the number of jumps of the process $(P,I,S)$ up to time `$t$'. Again we observe that the process $P$ can be written as 
\begin{equation}
P_t = P_0\prod_{k=1}^{N(t)}(1+\delta\alpha(I_{T_k}))\text{ a.s..}\label{Pt}   
\end{equation}
%Using \eqref{Gphi} we have
%\begin{equation}\label{Gphi2}
%\int_{0}^{t}\int_{\mathbb{R}}|G^\phi(v,z)|dz\;dv \leq 4c_1\|\phi\|_V t+\int_{0}^{t}P_0 \prod_{k=1}^{N(v-)}(1+\delta\alpha(I_{T_k})) dv \quad\text{a.s.}
%\end{equation}
We further note that due to assumption (A2), the Poisson mass points that correspond to a jump in the processes $P,I$ and $S$ lie in the rectangle $[0,T] \times [0,2c_1]$. We denote the number of Poisson mass points lying in the above said rectangle by $N_{\wp}(T)$. Since the intensity measure of $\wp$ i.e., the product Lebesgue measure of the rectangle $[0,T] \times [0,2c_1]$ is finite, hence $N_{\wp}(T)<\infty$ almost surely. Again as $N(t) \leq N_{\wp}(t)$, thus using \eqref{Pt} we have
\[P_{v^-} \leq P_0(1+\delta)^{N_{\wp}(T)} \text{ for every } v \in (0,T] \text{ a.s.}.\]
Now for $\delta>0$ we calculate $E[(1+\delta)^{N_{\wp}(T)}]$. Since $N_{\wp}(T)=\wp([0,T]\times[0,2c_1])$ is a Poisson random variable, we have
\begin{align}
E[(1+\delta)^{N_{\wp}(T)}]&=\sum_{n=0}^{\infty}(1+\delta)^n\mathbb{P}(N_{\wp}(T)=n)
%&=\sum_{n=0}^{\infty}b^n\mathbb{P}(\wp([0,T]\times[0,2c])=n)\nonumber\\
%&=e^{-2cT}\sum_{n=0}^{\infty}\frac{{(2bcT)}^n}{n!}
=e^{2c_1T\delta} <\infty. \label{EbNpT1}
\end{align}
%Since the series $\sum_{n=0}^{\infty}\frac{{(2bcT)}^n}{n!}$ converges therefore using \eqref{EbNpT1}, we conclude
%\begin{equation}
%E[b^{N_{\wp}(T)}]<\infty\label{EbNpT}.
%\end{equation}
Now using the equation \eqref{Gphi} we have
\[\int_{0}^{t}\int_{\mathbb{R}}|G^\phi(v,z)|dz\;dv\leq t(4c_1||\phi||_V+P_0(1+\delta)^{N_{\wp}(T)}) \text{ a.s..}\]
Therefore, using \eqref{EbNpT1} and the assumption (A5) we have
\begin{equation}
E\bigg[\int_{0}^{t}\int_{\mathbb{R}} |G^\phi(v,z)|dz\;dv\bigg]<\infty\;\;\text{for each } t\in[0,\infty). \label{gphiex}   
\end{equation}
Note that $G^\phi = \max\{G^\phi,0\}-\max\{-G^\phi,0\}$. It follows from \eqref{gphiex} that
\[E\bigg[\int_{0}^{t}\int_{\mathbb{R}}\max\{G^\phi,0\} dz\;dv\bigg]<\infty\;\;\text{for each } t\in[0,\infty), \;\; \text{and}\]
\[E\bigg[\int_{0}^{t}\int_{\mathbb{R}}\max\{-G^\phi,0\} dz\;dv\bigg]<\infty\;\;\text{for each } t\in[0,\infty).\]
Since $G^\phi$ is a predictable process as it is a left continuous process, hence $\max\{G^\phi,0\}$ and 
$\max\{-G^\phi,0\}$ are positive predictable processes.\par
Therefore, $M^+:=\{M^{+}_t\}_{t\ge 0}$  and $M^-_t :=\{M^{-}_t\}_{t\ge 0}$ where $M^{+}_t = \int_{0}^{t} \int_{\mathbb{R}}\max\{G^\phi,0\}\Tilde{\wp}(dz,dv)$, $M^{-}_t=\int_{0}^{t} \int_{\mathbb{R}}\max\{-G^\phi,0\}\Tilde{\wp}(dz,dv)$ are martingales with respect to the filtration $\{{\mathscr{F}}_t\}$ (see Remark 6.4c (p-300) of \cite{cinlar2011probability}). Hence $M$ is also a martingale with respect to $\{{\mathscr{F}}_t\}$.\\~\\ 
Moreover, since the process  $N^\phi=M+\phi(0,P_0,I_0,S_0)$ is also a martingale using the above assertion, we have
\begin{align*}
&N_t^\phi=E[N_T^\phi|{\mathscr{F}}_t].\\
\text{i.e., }&\phi(t,P_t,I_t,S_t)=E\bigg[g(P_T)+\int_{t}^{T}w(v,P_{v^-},I_{v^-},S_{v^-})dv \bigg\vert {\mathscr{F}}_t\bigg]. 
\end{align*}
Using Markovity of the process $(P,I,S)$ we see that the above equation can be written as
\begin{equation}
\phi(t,P_t,I_t,S_t)=E\bigg[g(P_T)+\int_{t}^{T}w(v,P_{v^-},I_{v^-},S_{v^-})dv \bigg\vert P_t,I_t,S_t\bigg] \label{storep1}  
\end{equation}
almost surely for every $t\in[0,T]$.\\
Furthermore, the assumption (A6) ascertains that the distribution of $(P_t,I_t,S_t)$ has full support for every $t \in [0,T]$ (see lemma \ref{lemmaA1}) then evaluating \eqref{storep1} for $w \in \{P_t=p,I_t=i,S_t=s\}$ we obtain
\begin{equation}
\phi(t,p,i,s)=E\bigg[g(P_T)+\int_{t}^{T}w(v,P_{v^-},I_{v^-},S_{v^-})dv \bigg\vert P_t=p,I_t=i,S_t=s\bigg]\label{storep}
\end{equation}
where $(t,p,i,s) \in \mathcal{D}$.\\\\
We clearly see from \eqref{storep} that every solution has the above stochastic representation therefore we conclude that the solution $\phi$ is unique. Hence the equation \eqref{thm2eq1} with the terminal condition \eqref{thm2eq2} has a unique solution in the class of functions $V \cap cl(\mathcal{D}_{t,s})$.\\~\\
Further, if $w_s$ exists and it is continuous then according to the assertion (5) of theorem \ref{thm1}, the solution $\phi$ of \eqref{thm1eq1}-\eqref{thm1eq2} belongs to $V\cap C^{1,1}\mathcal{(D)}$. Since $\phi$ also satisfies the equations \eqref{thm2eq1}-\eqref{thm2eq2} and the solution to these equations is unique in the class $V \cap cl(\mathcal{D}_{t,s})$, therefore \eqref{thm2eq1}-\eqref{thm2eq2} have a unique solution in the class $V\cap C^{1,1}\mathcal{(D)}$.
\end{proof}
\begin{remark}\label{remark3.1}
Let $\pi(t,p,i,s):=E[P_T\mid P_t=p,I_t=i,S_t=s$ (where $(t,p,i,s)\in\mathcal{D}$) denote the conditional expectation of the stock price at the terminal time given the event when the process is in state `$i$' with age `$s$' and the stock price is `$p$'. Studying $\pi$ would be helpful in gaining more insight into the dynamics of the stock price and understanding the behaviour of stock price over time. In \cite{pham2}, it has been shown that $\pi$ is the unique viscosity solution to a linear partial differential equation with a terminal condition. We prove below that $\pi$ belongs to $C^{1,1}\mathcal{(D)}$ and it uniquely satisfies a similar terminal-value problem in the classical sense.
\end{remark}
\begin{corollary}\label{corollary3.1}
$\pi$ belongs to $V\cap C^{1,1}\mathcal{(D)}$ and it is the unique solution to the following partial differential equation
\begin{equation}
\diffp{\pi}{t}+\diffp{\pi}{s}+\sum_{j\not \equiv i}h_{ij}(s)(\pi(t,p(1+\delta \alpha(j)),j,0)-\pi(t,p,i,s))=0 \label{pi}
\end{equation}
with the terminal condition
\begin{equation}
\pi(T,p,i,s)=p. \label{piT}    
\end{equation}
where $t\in[0,T],p\in[0,\infty),i\in\mathcal{X}$ and $s\in[0,\infty)$.
\end{corollary}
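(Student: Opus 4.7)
The plan is to view Corollary \ref{corollary3.1} as an immediate specialization of Theorem \ref{thm2}. I would set $g(p) := p$ and $w \equiv 0$ throughout, and check that these choices meet the hypotheses of Theorem \ref{thm2}: the function $g(p) = p$ is continuous on $[0,\infty)$ with (at most) linear growth, $w \equiv 0$ trivially lies in $V$, and its partial derivative $w_s \equiv 0$ is continuous on $\mathcal{D}^{\mathrm{o}}$. Theorem \ref{thm2} then immediately produces a unique $\phi \in V \cap C^{1,1}(\mathcal{D})$ solving \eqref{pi}-\eqref{piT}.

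Next I would identify this $\phi$ with $\pi$. In the proof of Theorem \ref{thm2}, every solution in $V\cap cl(\mathcal{D}_{t,s})$ was shown to admit the stochastic representation \eqref{storep}. Substituting $g(p)=p$ and $w \equiv 0$ collapses that representation to
$$\phi(t,p,i,s) = E[P_T \mid P_t = p, I_t = i, S_t = s] = \pi(t,p,i,s),$$
so $\pi = \phi$ belongs to $V \cap C^{1,1}(\mathcal{D})$ and is the unique solution of \eqref{pi}-\eqref{piT} in $V\cap cl(\mathcal{D}_{t,s})$, and \emph{a fortiori} in $V\cap C^{1,1}(\mathcal{D})$.

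As a sanity check that $\pi$ itself lies in $V$ (independent of the identification above), one can use the product formula \eqref{Pt} applied on $(t,T]$, giving $P_T = P_t \prod_{k=N(t)+1}^{N(T)}(1+\delta\alpha(I_{T_k}))$ a.s., together with the domination $N(T)-N(t) \leq N_\wp(T)$ and the identity $E[(1+\delta)^{N_\wp(T)}]=e^{2c_1 T\delta}$ from \eqref{EbNpT1}. This yields $\pi(t,p,i,s) \leq p\, e^{2c_1 T\delta}$, hence $\|\pi\|_V \leq e^{2c_1 T\delta} < \infty$.

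I do not anticipate any substantive obstacle, since the bulk of the analytic and probabilistic work has already been absorbed into Theorems \ref{thm1} and \ref{thm2}. The only conceptual content of the corollary is the promotion of $\pi$ from a viscosity solution (as established in \cite{pham2}) to a classical $C^{1,1}$ solution, and this promotion is delivered for free by assertion (5) of Theorem \ref{thm1} applied through Theorem \ref{thm2}.
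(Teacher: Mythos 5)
Your proposal is correct and follows essentially the same route as the paper: specialize Theorem \ref{thm2} with $g(p)=p$ and $w\equiv 0$, use the stochastic representation \eqref{storep} to identify the unique solution with $\pi$, and invoke the $w_s$-continuity clause for the $C^{1,1}$ regularity. The extra sanity check that $\pi\in V$ is harmless but redundant (and, strictly, the conditional bound should use $\wp((t,T]\times[0,2c_1])$, which is independent of $\mathscr{F}_t$, rather than $N_\wp(T)$).
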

\begin{proof}
We consider the case when $g$ is the identity map on $[0,\infty)$ and $w=0$ in the equations \eqref{thm2eq1} and \eqref{thm2eq2}. Then using \eqref{storep} we see that the unique solution of \eqref{thm2eq1}-\eqref{thm2eq2} in the class $V \cap cl(D_{t,s})$ is nothing but $\pi$. Moreover since $w=0$, theorem \ref{thm2} asserts that the solution of \eqref{thm2eq1}-\eqref{thm2eq2} belongs to $V\cap C^{1,1}\mathcal{(D)}$. Hence $\pi$ belongs to $V\cap C^{1,1}\mathcal{(D)}$.     
\end{proof}

\section{The Market-making strategy and the portfolio of the agent}
\subsection{Formulation of the model}The incoming market orders can be categorised into two kinds, small market orders and big market orders depending on the size of the order. The `small' market orders do not cause any change in the best bid/ask price whereas the big market orders lead to a jump in the best bid/ask price that can be upwards or downwards depending on the situation. The outstanding limit orders are documented in a Limit Order Book (LOB) which is maintained by the specialist at the exchange. We are analysing the market-making problem from the point of view of an agent whose strategy is to continuously place limit orders of constant size `$L$' on both side of the LOB at the best price. We assume that the market order dynamics is not affected by the presence of the agent and the incoming orders get executed by the agent only on one side of the LOB at a time.\\ \par
Now, let $\vartheta_{+}$ and $\vartheta_{-}$ be two probability measures on $\mathcal{K}:=\{0,1,\ldots,K\}$, first $K$ number of non-negative integers. For each $\nu \in \{+,-\}$, let $\lambda_{\nu} : [0,\infty) \rightarrow [0, \infty)$ be a measurable function. Now for each $s\geq 0$, we define a family of right open, left closed non-overlapping intervals $\{\Lambda_{\nu k}(s) \mid \nu \in \{+,-\}, k \in \mathcal{K}\}$ such that this collection of intervals is disjoint from $H(s)$ and the length of each interval $\Lambda_{\nu k}(s)$ is equal to $\lambda_{\nu}(s)\vartheta_{\nu}(\{k\})$. Let $\epsilon > 0$ be the fixed cost associated with each transaction. We also make the following assumptions:
\begin{itemize}
\item [(A7)]$\sup\limits_{\nu \in \{+,-\},y\in [0,\infty)} \lambda_{\nu}(y)<\infty$ and we denote it by $c_2$.
\item[(A8)]${\lambda}_+$ and ${\lambda}_-$ are continuous functions.
\end{itemize}~\\\par
\textbf{Financial interpretation.} Here $\nu$ represents the two sides of the LOB where by convention, we assume that $\nu=-$ represents the bid side and $\nu=+$ represents the ask side of the LOB. Also, $\vartheta_{\nu}$ gives the probability distribution of the random number of units of the stock getting traded on the side $\nu$ of the LOB, and $k \in\mathcal{K}$ denotes the number of quantities of the stock that get executed by the agent. The arrival of and execution of a small market order on side $\nu$ with `$k$' units being traded by the agent is captured by the interval $\Lambda_{\nu k}(s)$ where `$s$' denotes the age of the process. The intensity of incoming small market orders is given by ${\lambda}_{\nu}(s)$. \\\par
On the other hand, in the case of big market orders, the kind of order that must have arrived or the side that gets executed is determined by the current and the last price jump direction.The table below demonstrates this in different scenarios. Moreover, the interval $H_{ij}(s)$ corresponds to the case when a big market order arrives causing a jump in the stock price with the state process changing its state to `$j$' from `$i$ and `$s$' being the time elapsed since last jump, with the intensity of the arrival of big market orders equal to $h_{ij}(s)$. The disjointness of these intervals ensures that only one kind of order is being executed on one side of the LOB at a time. 
\begin{table}[h]
\centering
\caption{Demonstrating the side of LOB that gets executed upon the arrival of a big market order}
\begin{tabular}{|c|c|c|c|c|P{6cm}|}
\hline
$I_{t^-}$ & $\alpha(I_{t^-})$ & $I_t$ & $\alpha(I_t)$ & Type of incoming order & Side of LOB pertaining to executed order\\
\hline
1 & -1 & 3 & -1 & sell & Bid(-)\\
\hline
1 & -1 & 4 & +1 & buy & Ask(+)\\
\hline
2 & +1 & 3 & -1 & sell & Bid(-)\\
\hline
2 & +1 & 4 & +1 & buy & Ask(+)\\
\hline
3 & -1 & 1 & -1 & sell & Bid(-)\\
\hline
3 & -1 & 2 & +1 & buy & Ask(+)\\
\hline
4 & +1 & 1 & -1 & sell & Bid(-)\\
\hline
4 & +1 & 2 & +1 & buy & Ask(+)\\
\hline
\end{tabular}
\label{fig:side}
\end{table}\\
\textbf{The wealth and inventory processes.} The strategy adopted by the agent is described using a stochastic process $\Hat{l}=({\Hat{l}}^+,{\Hat{l}}^-)$ where ${\Hat{l}}^+$ and ${\Hat{l}}^-$ are predictable processes taking values in the control space $\{0,1\}\times\{0,1\}$ denoted by $L$. We also recall that the bid-ask spread was assumed to be equal to $2p\delta$ where $p$ is the mid-price. Now for $l=(l^+,l^-) \in L$, let $g_4$ and $g_5$ be two real valued maps defined on the sets $\mathcal{X}\times[0,\infty)\times[0,\infty)\times\mathbb{R}$ and $\mathcal{X}\times[0,\infty)\times \mathbb{R}$ respectively as follows:
\begin{align}
g_4(i,s,p,l,z) &= \sum_{\nu=\pm,k\in\mathcal{K}}\nu l^\nu k( p(1+\nu\delta)-\nu\epsilon)1_{\Lambda_{\nu k}(s)}(z)\nonumber\\
&+\sum_{j \not\equiv i}\alpha(j)l^{\bar{\alpha}(j)}K(p(1+\delta\alpha(j))-\alpha(j)\epsilon)1_{H_{ij}(s)}(z), \label{g4} \\
g_5(i,s,l,z) &= \sum_{\nu=\pm,k\in\mathcal{K}} \nu l^\nu k 1_{\Lambda_{\nu k}(s)}(z) + \sum_{j \not\equiv i}\alpha(j)l^{\bar{\alpha}(j)}K1_{H_{ij}(s)}(z). \label{g5}
\end{align}
The table below describes the change in portfolio of the agent when an incoming market order arrives, subject to the control $l$, thus explaining the rationale behind the definition of functions $g_4$ and $g_5$.
\begin{table}[h]
\caption{Depicting change in the portfolio of the agent corresponding to the control `$l$' upon the arrival of a market order in different cases}
\centering
\begin{tabular}{|P{1.8cm}|P{1.5cm}|P{1.3cm}|P{2cm}|P{3.2cm}|P{1.8cm}|P{2.1cm}|}
\hline
Type of incoming market order & Side of LOB & No of units traded & Price at which order gets executed & Change in Wealth & Change in Inventory & Change in Portfolio\\
\hline
Small & Ask(+) & $k$ & $p(1+\delta)$ & $l^+ k(p(1+\delta)-\epsilon)$ & $-l^+ k$ & $l^+k(p\delta-\epsilon)$\\
\hline
Small & Bid(-) & $k$ & $p(1-\delta)$ & $l^- k(-p(1-\delta)-\epsilon)$ & $l^- k$ & $l^- k(p\delta-\epsilon)$\\
\hline
Big & Ask(+) & $K$ & $p(1+\delta)$ & $l^+ K(p(1+\delta)-\epsilon)$ & $-l^+ K$ & $-l^+ K\epsilon$\\
\hline
Big & Bid(-) & $K$ & $p(1-\delta)$ & $l^- K(-p(1-\delta)-\epsilon)$ & $l^- K$ & $-l^- K\epsilon$\\
\hline
\end{tabular}
\label{fig:portfolio}
\end{table}\\
We now describe the portfolio of the agent associated with the control process $\Hat{l}$ using the processes $X$ and $Y$ which denote the wealth and the inventory of the agent respectively and are given by
\begin{align}
X_t &=X_0 + \int_{(0,t]\times\mathbb{R}}g_4(I_{s^-},S_{s^-},P_{s^-},{\Hat{l}}_s,z)\wp(ds,dz), \label{wealthfn}\\
Y_t &=Y_0-\int_{(0,t]\times\mathbb{R}}g_5(I_{s^-},S_{s^-},{\Hat{l}}_s,z)\wp(ds,dz), \label{inventoryfn}
\end{align}
where $X_0$ and $Y_0$ are ${\mathscr{F}}_0$-valued random variables valued in $\mathbb{R}$ and $\mathbb{Z}$ respectively.
\subsection{The infinitesimal generator of the control process}
In the case of a constant control taking value $l$ $(=(l^+,l^-))$ $\in L$, we observe that due to the assumptions (A2) and (A7), there exists a pathwise unique strong solution $(P,I,S,X,Y) :=\{(P_t, I_t, S_t,X_t,Y_t)\}_{t\ge 0}$ to the system of equations \eqref{agefn}-\eqref{pricefn},\eqref{wealthfn}-\eqref{inventoryfn} that is right continuous having left limits, locally bounded and strongly Markov (see Theorem 3.4 (p-474) of \cite{cinlar2011probability}) similar to the case of the system of equations \eqref{agefn}-\eqref{pricefn}.\\ \par
We now find the infinitesimal generator of the process $(P,I,S,X,Y)$ in the above mentioned case of a constant control taking value $l$ $(=(l^+,l^-))$ $\in L$. To this end, consider ${\psi}$ be a real-valued continuous map defined on the set $[0,\infty)\times\mathcal{X}\times[0,\infty)\times\mathbb{R}\times\mathbb{Z}$ which is compactly supported and continuously differentiable function in the third variable. Let $\Tilde{G}^{{\psi}}$ be real valued map on $(0,T]\times\mathbb{R}$ such that for $(v,z)\in (0,T]\times\mathbb{R}$,
\begin{align}
\Tilde{G}^{{\psi}}(v,z):=&\psi(P_{{v}^{-}}(1+g_3(I_{{v}^{-}},S_{{v}^{-}},z)),I_{{v}^{-}}+g_2(I_{{v}^{-}},S_{{v}^{-}},z),S_{{v}^{-}}-g_1(S_{{v}^{-}},I_{{v}^{-}},z),\nonumber\\
&X_{{v}^{-}}+g_4(I_{{v}^{-}},S_{{v}^{-}},P_{{v}^{-}},l,z),Y_{{v}^{-}}-g_5(I_{{v}^{-}},S_{{v}^{-}},l,z))-{\psi}(P_{{v}^{-}},I_{{v}^{-}},S_{{v}^{-}},X_{{v}^{-}},Y_{{v}^{-}}). \label{gtildepsi}
\end{align}
Using It\^{o}'s formula we have,
\begin{align}
&\psi(P_t,I_t,S_t,X_t,Y_t) - \psi(P_0,I_0,S_0,X_0,Y_0) \nonumber\\
&= \int_{0}^{t}\diffp{\psi}{s}(P_{v^-},I_{v^-},S_{v^-},X_{v^-},Y_{v^-})dv+\int_{0}^{t} \int_{\mathbb{R}}\tilde{G}^{\psi}(v,z)\wp(dz,dv).\label{genpisxy1}
\end{align}
Now,
\begin{align*}
\int_{0}^{t} \int_{\mathbb{R}}\Tilde{G}^{\psi}(v,z)dz\;dv\\
=\int_{0}^{t}\int_{\mathbb{R}}\bigg(\psi\bigg(&P_{{v}^{-}}\bigg(1 + \sum_{j \not\equiv I_{{v}^{-}}} \delta \alpha(j)1_{H_{I_{v^{-}}j}(S_{{v}^{-}})}(z)\bigg),I_{{v}^{-}}+\sum_{j \not\equiv I_{{v}^{-}}} (j-I_{{v}^{-}})1_{H_{I_{v^{-}}j}(S_{{v}^{-}})}(z),\\
&S_{{v}^{-}}(1-1_{H(S_{{v}^{-}})}(z)),X_{{v}^{-}}+\sum_{\nu=\pm,k\in\mathcal{K}}\nu l^\nu k( P_{{v}^{-}}(1+\nu\delta)-\nu\epsilon)1_{\Lambda_{\nu k}(S_{{v}^{-}})}(z)\\
&+\sum_{j \not\equiv I_{{v}^{-}}}\alpha(j)l^{\bar{\alpha}(j)}K(P_{{v}^{-}}(1+\delta\alpha(j))-\alpha(j)\epsilon)1_{H_{I_{{v}^{-}}j}(S_{{v}^{-}})}(z),\\
&Y_{{v}^{-}}-\sum_{\nu=\pm,k\in\mathcal{K}} \nu l^\nu k 1_{\Lambda_{\nu k}(S_{{v}^{-}})}(z)+\sum_{j \not\equiv I_{{v}^{-}}}\alpha(j) l^{\bar{\alpha}(j)}K1_{H_{I_{{v}^{-}}j}(S_{{v}^{-}})}(z)\\
-\psi(&P_{{v}^{-}},I_{{v}^{-}},S_{{v}^{-}},X_{{v}^{-}},Y_{{v}^{-}})\bigg)dz\;dv\\
\end{align*}
\begin{align}
=\int_0^t \bigg(&\sum_{\nu=\pm,k\in\mathcal{K}} \lambda_{\nu}(S_{{v}^{-}})\vartheta_{\nu}(\{k\})(\psi(P_{{v}^{-}},I_{{v}^{-}},S_{{v}^{-}},X_{{v}^{-}}+\nu l^\nu k( P_{{v}^{-}}(1+\nu\delta)-\nu\epsilon),Y_{{v}^{-}}-\nu l^\nu k)\nonumber\\
-&\psi(P_{{v}^{-}},I_{{v}^{-}},S_{{v}^{-}},X_{{v}^{-}},Y_{{v}^{-}})\bigg)dv\nonumber\\
+\int_0^t\bigg(&\sum_{j\neq I_{{v}^{-}}}h_{I_{{v}^{-}}j}(S_{{v}^{-}})(\psi(P_{{v}^{-}}(1+\delta\alpha(j)),j,0,X_{{v}^{-}}+\alpha(j) l^{\bar{\alpha}(j)}K(P_{{v}^{-}}(1+\delta\alpha(j))-\alpha(j)\epsilon),\nonumber\\
&Y_{{v}^{-}}-\alpha(j)l^{\bar{\alpha}(j)}K)-\psi(P_{{v}^{-}},I_{{v}^{-}},S_{{v}^{-}},X_{{v}^{-}},Y_{{v}^{-}})\bigg)dv. \label{genpisxy2}
\end{align}
Hence using \eqref{genpisxy2} we see that $\int_{0}^{t} \int_{\mathbb{R}}\Tilde{G}^\psi(v,z)dz\;dv$ is finite due to the assumptions (A2), (A7) and the fact that $\psi$ is compactly supported. So we can write
\begin{equation}
\int_{0}^{t} \int_{\mathbb{R}}\Tilde{G}^\psi(v,z)\wp(dz,dv) =  \int_{0}^{t} \int_{\mathbb{R}}\Tilde{G}^\psi(v,z)\tilde{\wp}(dz,dv) + \int_{0}^{t} \int_{\mathbb{R}}\Tilde{G}^\psi(v,z)dz\;dv. \label{genpisxy3}    
\end{equation}\\
Therefore we conclude using \eqref{genpisxy1}-\eqref{genpisxy3} that the infinitesimal generator ${\tilde{\mathcal{A}}}^l$ of the process $(P,I,S,X,Y)$ associated with the constant control taking value $l$ is given by
\begin{align*}
&{\tilde{\mathcal{A}}}^l\psi(p,i,s,x,y)=\diffp{\psi}{s}(p,i,s,x,y)\\
&+\sum_{\nu=\pm,k\in\mathcal{K}}{\lambda}_{\nu}(s)\vartheta_{\nu}(\{k\})(\psi(p,i,s,x+\nu l^\nu k( p(1+\nu\delta)-\nu\epsilon),y-\nu l^\nu k)-\psi(p,i,s,x,y))\\
&+\sum_{j\not\equiv i}h_{ij}(s)(\psi(p(1+\delta\alpha(j)),j,0,x+\alpha(j)l^{\bar{\alpha}(j)}K(p(1+\delta\alpha(j))-\alpha(j)\epsilon),y-\alpha(j)l^{\bar{\alpha}(j)}K)\\
&-\psi(p,i,s,x,y)).
\end{align*}
Since for every $i\in\mathcal{X}$, the set $\{\bar{\alpha}(j)\mid j\not\equiv i\}=\{+,-\}$ therefore we can replace $\nu$ by $\bar{\alpha}(j)$ in the above equation and we obtain
\begin{align}
&{\tilde{\mathcal{A}}}^l\psi(p,i,s,x,y)=\diffp{\psi}{s}(p,i,s,x,y)\nonumber\\
&+\sum_{j\not\equiv i}\bigg({\lambda}_{\bar{\alpha}(j)}(s)\sum_{k\in\mathcal{K}}\vartheta_{\bar{\alpha}(j)}(\{k\})(\psi(p,i,s,x+\alpha(j) l^{\bar{\alpha}(j)} k( p(1+\delta\alpha(j))-\alpha(j)\epsilon),y-\alpha(j) l^{\bar{\alpha}(j)} k)\nonumber\\
&-\psi(p,i,s,x,y))+h_{ij}(s)(\psi(p(1+\delta\alpha(j)),j,0,x+\alpha(j)l^{\bar{\alpha}(j)}K(p(1+\delta\alpha(j))-\alpha(j)\epsilon),\nonumber\\
&y-\alpha(j)l^{\bar{\alpha}(j)}K)-\psi(p,i,s,x,y))\bigg).\label{genpisxy}
\end{align}

\section{Utility function and the HJB equation}
\subsection{The utility function} After the formulation of market-making problem, the next goal is to maximize the risk-sensitive terminal expected portfolio value of the agent with a constant risk-aversion parameter $\eta >0$ taking into account all possible strategies. We aim to maximize the following function $J_{\Hat{l}}$ associated with the control process $\Hat{l}$ where  
\begin{equation}
J_{\Hat{l}}(t,p,i,s,x,y)=E[X_T+P_T Y_T-\eta {Y_T}^2\mid P_t=p,I_t=i,S_t=s,X_t=x,Y_t=y] \label{utility1}
\end{equation}
where $(t,p,i,s,x,y) \in [0,T]\times[0,\infty)\times\mathcal{X}\times[0,\infty)\times\mathbb{R}\times\mathbb{Z}$.\\\\
The market-making optimization problem is to maximize $J_{\Hat{l}}$ over the class $\mathcal{L}$ of all possible strategies and the corresponding utility function is given by
\begin{equation}
J(t,p,i,s,x,y):=\sup_{\Hat{l}\in \mathcal{L}}J_{\Hat{l}}(t,p,i,s,x,y) \label{utility2}
\end{equation}
where $(t,p,i,s,x,y)$ are as in equation \eqref{utility1}.
\begin{lemma}\label{lemma5.1}
For every $(t,p,i,s,x,y)\in [0,T]\times[0,\infty)\times\mathcal{X}\times[0,\infty)\times\mathbb{R}\times\mathbb{Z}$, the utility function $J$ satisfies
\begin{equation}
J(t,p,i,s,x,y)\leq x+yp+Kp\frac{(1+\delta)}{\delta}\left(e^{c(T-t)\delta}-1\right)+Kc(T-t)\epsilon \label{Jbound}   
\end{equation}
where $c=2(K+1)\max{\{c_1,c_2\}}$.
\end{lemma}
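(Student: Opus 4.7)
Since $\eta > 0$ and $Y_T^2 \geq 0$, we have $-\eta Y_T^2 \leq 0$ in \eqref{utility1}, so for every admissible control $\hat l \in \mathcal L$
\[
J_{\hat l}(t,p,i,s,x,y) \leq E\bigl[X_T + P_T Y_T \bigm| P_t = p,\, I_t = i,\, S_t = s,\, X_t = x,\, Y_t = y\bigr],
\]
and it suffices to bound the right-hand side above by the expression in \eqref{Jbound} uniformly in $\hat l$. The target therefore reduces to a drift estimate on the marked-to-market portfolio value $Z_v := X_v + P_v Y_v$.

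The plan is to apply It\^o's formula to $Z_v$ via \eqref{price}, \eqref{wealthfn} and \eqref{inventoryfn}. A direct jump-by-jump computation (parallel to the generator derivation culminating in \eqref{genpisxy}) shows that $Z$ changes by $\hat l^\nu_v k\,(P_{v^-}\delta - \epsilon)$ at a small order of side $\nu$ with $k$ units, and by $P_{v^-}\delta\alpha(j) Y_{v^-} - \hat l^{\bar\alpha(j)}_v K\epsilon$ at a big jump $I_{v^-}\to j$. Writing $Z_T - Z_t = M_T + \int_t^T \mathrm{drift}(u)\, du$ with $M$ a compensated-Poisson stochastic integral, the integrability argument already used in the uniqueness portion of Theorem \ref{thm2} (invoking (A2), (A5), (A7), the pathwise bound $|Y_v| \leq |Y_0| + K N_\wp(T)$, and the Poisson moment estimate $E[(1+\delta)^{N_\wp(T)}] = e^{2c_1 T\delta} < \infty$) shows that $M$ is a genuine $\{\mathscr F_t\}$-martingale, so that $E[M_T \mid \mathscr F_t] = 0$.

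Bounding $|\mathrm{drift}(u)|$ term by term, using (A2) and (A7) to control the intensities and absorbing all the resulting constants into $c = 2(K+1)\max\{c_1,c_2\}$, I would obtain
\[
|\mathrm{drift}(u)| \leq \tfrac{c\delta}{K+1}\, P_u\,(K + |Y_u|) + Kc\epsilon.
\]
The constant summand integrates directly to $Kc(T-t)\epsilon$. For the first summand, set $\Phi(u) := E[P_u(K + |Y_u|) \mid \mathscr F_t]$ and apply It\^o to $P_u(K + |Y_u|)$: the triangle-inequality bounds $|Y_u \pm K| \leq |Y_u| + K$ (for the inventory jumps) and $(1+\delta\alpha(j)) \leq 1+\delta$ (for the price jumps), together with the elementary estimate $E[P_u \mid \mathscr F_t] \leq p\, e^{c\delta(u-t)}$ (derivable by the Gronwall argument underlying Corollary \ref{corollary3.1}), yield a linear differential inequality of the form $\Phi'(u) \leq c\delta\,\Phi(u) + Kc(1+\delta)\,p\, e^{c\delta(u-t)}$. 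A Gronwall step then produces an exponential control on $\Phi$, and a final integration of $\tfrac{c\delta}{K+1}\,\Phi(u)$ over $[t,T]$ delivers the $K p\,(1+\delta)/\delta\,(e^{c\delta(T-t)}-1)$ term in \eqref{Jbound}.

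The main technical obstacle is the contribution $P_u\delta\alpha(j) Y_{u^-}$ from big jumps in $\mathrm{drift}(u)$: this term is independent of $\hat l$ (it reflects the agent's passive exposure to price moves through current inventory) and is what forces the appearance of $|Y_u|$ in the intermediate drift bound; the Gronwall step on $\Phi$ is precisely what absorbs it into a single exponential. The factor $K+1$ in the rate $c$ is chosen so that both the $K$-unit change in $|Y|$ per trading event and the $(1+\delta)$-factor change in $P$ per big jump can be handled simultaneously inside the exponential $e^{c\delta(T-t)}$ appearing on the right-hand side of \eqref{Jbound}.
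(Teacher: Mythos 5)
Your route is genuinely different from the paper's. After the common first step of discarding $-\eta Y_T^2\le 0$, the paper does not use It\^o's formula, a martingale decomposition, or Gronwall at all: it argues pathwise that each order execution changes the marked-to-market portfolio $X+YP$ by at most $K(P_{-}\delta+\epsilon)$ (read off the last column of Table \ref{fig:portfolio}), telescopes over the jump times using $P_{T_{N(t)+n}}\le P_t(1+\delta)^n$ to get the pathwise inequality \eqref{XTYTPT1}, dominates the number of relevant jumps by $\wp((t,T]\times[0,c])$, and concludes with the Poisson moment identity \eqref{EbNpT1}. Your semimartingale/drift approach is a legitimate alternative in principle, and is arguably more systematic, but as written it does not close to the stated inequality.

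The gap is in the Gronwall step. You correctly isolate the passive-exposure contribution $P_{v^-}\delta\alpha(j)Y_{v^-}$ in the jump of $Z=X+PY$ at a big order; its compensated drift is $P\delta Y\,(h_+(s)-h_-(s))$, which does not vanish in general, so the drift of $Z$ genuinely contains a term of order $P|Y|$. You then propose to absorb it via Gronwall applied to $\Phi(u)=E[P_u(K+|Y_u|)\mid\mathscr F_t]$. But $\Phi(t)=p(K+|y|)$, so the resulting bound on $\int_t^T\tfrac{c\delta}{K+1}\Phi(u)\,du$ necessarily contains a remainder of the form $\tfrac{p(K+|y|)}{K+1}\bigl(e^{c\delta(T-t)}-1\bigr)$, i.e.\ a term growing linearly in $|y|$ beyond the single $yp$ allowed on the right-hand side of \eqref{Jbound}; for $|y|$ large this is not dominated by $Kp\tfrac{1+\delta}{\delta}\bigl(e^{c\delta(T-t)}-1\bigr)$, and no choice of constants absorbed into $c$ fixes this. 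To rescue your approach you would have to track the signed quantity $E[P_uY_u\mid\mathscr F_t]$ rather than $E[P_u|Y_u|\mid\mathscr F_t]$, so that the revaluation of the pre-existing inventory is carried by the $y\pi(t,p,i,s)$-type term instead of being crushed by a triangle inequality. It is worth noting that the paper's per-event bound $K(P\delta+\epsilon)$ is computed as $\Delta X+P_{\mathrm{new}}\Delta Y$ and therefore also omits exactly this revaluation term $P\delta\alpha(j)Y$; so you have in fact located the delicate point of the lemma, but your proposed resolution reintroduces an $|y|$-dependence that the claimed bound does not permit.
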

\begin{proof}
Referring to Table \ref{fig:portfolio}, we see that the change in portfolio value (when a limit order gets fulfilled) corresponding to every control process $\Hat{l}$ is always less than $K(p\delta+\epsilon)$ where $p$ denotes the mid-price just before the execution of the limit order. In another words for $t\in[0,T]$ and for every $\Hat{l}\in\mathcal{L}$, this means
\[X_{T_{N(t)+1}}+Y_{T_{N(t)+1}}P_{T_{N(t)+1}}\leq X_t+Y_tP_t+K(P_t\delta+\epsilon).\]
Then using the dynamics of process $P$ and the above inequality we obtain
\begin{align}
X_T+Y_TP_T&\leq X_t+Y_tP_t+\sum_{n=1}^{N(T)-N(t)}K(P_t{(1+\delta)}^{n}\delta+\epsilon)\nonumber\\
&= X_t+Y_tP_t+KP_t(1+\delta)\frac{{(1+\delta)}^{N(T)-N(t)}-1}{\delta}+K(N(T)-N(t))\epsilon. \label{XTYTPT1}
\end{align}
We further note that all Poisson points corresponding to the change in portfolio lie in the rectangle $[0,T]\times[0,c]$ and 
\[N(T)-N(t)\leq \wp((t,T]\times[0,c]).\]
Now using \eqref{XTYTPT1} and properties of $\wp$, we get
\begin{align*}
&E[X_T+Y_TP_T\mid (P_t,I_t,S_t,X_t,Y_t)] \\
&\leq X_t+Y_tP_t+KP_t\frac{(1+\delta)}{\delta}\left(E\left[{(1+\delta)}^{\wp((t,T]\times[0,c])}\right]-1\right)+K E[\wp((t,T]\times[0,c])]\epsilon.
\end{align*}
Proceeding similarly as in \eqref{EbNpT1} and using the fact that intensity measure of $\wp$ is the product Lebesgue measure, we obtain
\begin{align}
&E[X_T+Y_TP_T\mid P_t=p,I_t=i,S_t=s,X_t=x,Y_t=y]\nonumber\\
&\leq x+yp+Kp\frac{(1+\delta)}{\delta}\left(e^{c(T-t)\delta}-1\right)+Kc(T-t)\epsilon. \label{XTYTPT2}
\end{align}
Since 
\begin{align*}
&E[X_T+Y_TP_T-\eta {Y_T}^2\mid P_t=p,I_t=i,S_t=s,X_t=x,Y_t=y]\\
&\leq E[X_T+Y_TP_T\mid P_t=p,I_t=i,S_t=s,X_t=x,Y_t=y], 
\end{align*}
and as \eqref{XTYTPT2} is true for every control process $\Hat{l}\in\mathcal{L}$, we have shown that \eqref{Jbound} holds for every $(t,p,i,s,x,y)\in [0,T]\times[0,\infty)\times\mathcal{X}\times[0,\infty)\times\mathbb{R}\times\mathbb{Z}$.
\end{proof}
The above lemma proves that the value function $J$ is well defined and finite.
\subsection{Hamilton-Jacobi-Bellman equation}
For a constant control taking value $l\in L$, we first define the operator
\begin{equation}
{\mathcal{A}}^l:=\diffp{}{t}+\diffp{}{s}+\max_{l\in L}\sum_{j\not\equiv i}\left({\lambda}_{\bar{\alpha}(j)}(s)\sum_{k\in\mathcal{K}}\vartheta_{\bar{\alpha}(j)}(\{k\}){\Delta}_1+h_{ij}(s){\Delta}_2\right) \label{Al}
\end{equation}
where
\begin{align}
{\Delta}_1 \Hat{J}(t,p,i,j,s,x,y,l):&= \Hat{J}(p,i,s,x+\alpha(j) l^{\bar{\alpha}(j)} k( p(1+\delta\alpha(j))-\alpha(j)\epsilon),
y-\alpha(j)l^{\bar{\alpha}(j)} k)\nonumber\\
&-\Hat{J}(t,p,i,s,x,y),\label{Delta1}\\
{\Delta}_2\Hat{J}(t,p,i,j,s,x,y,l):&=\Hat{J}(t,p(1+\delta\alpha(j)),j,0,x+\alpha(j)l^{\bar{\alpha}(j)}K(p(1+\delta\alpha(j))-\alpha(j)\epsilon),\nonumber\\
&y-\alpha(j)l^{\bar{\alpha}(j)}K)-\Hat{J}(t,p,i,s,x,y).\label{Delta2}
\end{align}
for $\Hat{J}$ belonging to the domain of ${\mathcal{A}}^l$.\\\\
Now, the HJB obtained on applying Bellman's principle of dynamic programming for controlled Markov processes (see section III.7, pp 131-133, \cite{fleming}) for the above optimization problem is given by
\begin{equation}
\max_{l \in L}{\mathcal{A}}^l J=0. \label{hjb1}    
\end{equation}
with the terminal condition
\begin{equation}
J(T,p,i,s,x,y)=x+yp-\eta{y^2}\quad \text{for all } (p,i,s,x,y)\in [0,T]\times[0,\infty)\times\mathcal{X}\times[0,\infty)\times\mathbb{R}\times\mathbb{Z}. \label{hjb2}
\end{equation}\\
Moreover in the case when the agent adopts the \emph{holding} strategy at time `$t$', it means she does not place any limit order on either side of the LOB after time $t$. So if $l$ is the control process then we have $l_v=0$ for all $v\in(t,T]$. Now using \eqref{g4}, \eqref{g5},\eqref{wealthfn}, \eqref{inventoryfn}, \eqref{utility1} and \eqref{utility2}, we see that the value function in this case which we denote by $J_0$ is equal to 
\begin{equation}
J_0(t,p,i,s,x,y)=x+y\pi(t,p,i,s)-\eta y^2. \label{J0}   
\end{equation}

\subsection{Solution of HJB equation}~\\\\
We discuss the no risk aversion case when $\eta=0$. The value function in this case is independent of the penalization associated with holding inventory. The agent can hold large inventory in order to maximize her return without worrying about any kind of risks involved with holding large inventory such as holding costs incurred, possibility of degradation with time etc.
\begin{theorem}\label{eta0}
For $\eta=0$, the utility function defined in \eqref{utility2} is given by
\begin{equation}
J^{(0)}(t,p,i,s,x,y)=x+y\pi(t,p,i,s)+u(t,p,i,s) \label{J(0)}    
\end{equation}
where
\begin{equation}
u(t,p,i,s)=\sum_{j\not\equiv i}E\left[\int_t^T \max{\{m_j(v,P_{v^-},I_{v^-},S_{v^-}),0\}dv}\bigg\vert P_t=p,I_t=i,S_t=s\right]\label{u}
\end{equation}
and
\begin{align}
m_j(t,p,i,s)&={\lambda}_{\bar{\alpha}(j)}(s)\sum_{k\in\mathcal{K}}k{\vartheta}_{\bar{\alpha}(j)}(\{k\})(\alpha(j) p-\alpha(j)\pi(t,p,i,s)+(\delta-\epsilon))\nonumber\\
&+h_{ij}(s)K(\alpha(j)p-\alpha(j)\pi(t,p(1+\delta\alpha(j)),j,0)+(\delta-\epsilon)).\label{mj}      
\end{align}
Furthermore, the optimal control denoted by ${\Hat{l}}^{(0)}$ in this case is given by
\begin{equation}
{\Hat{l}}^{(0)}(t,p,i,s)=(1_{m_{j_1}(t,p,i,s)>0},1_{m_{j_2}(t,p,i,s)>0}) \label{Hatl0}
\end{equation}
where $\{j_1,j_2\}=\{j\in\mathcal{X}\mid j\not\equiv i\}$.
\end{theorem}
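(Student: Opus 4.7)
The strategy is to guess the affine-in-$(x,y)$ ansatz $J^{(0)}(t,p,i,s,x,y) = x + y\pi(t,p,i,s) + u(t,p,i,s)$, substitute it into the HJB system \eqref{hjb1}--\eqref{hjb2}, show that the $l$-dependent part decouples and produces a linear terminal-value problem for $u$ of the type handled by Theorem \ref{thm2}, and finally run a standard verification argument against the candidate $\hat l^{(0)}$ of \eqref{Hatl0}. The motivation is \eqref{J0}: with $\eta=0$ the holding policy already delivers $J_0 = x + y\pi$, which must be a lower bound for $J^{(0)}$, and the terminal condition \eqref{hjb2} combined with $\pi(T,p,i,s)=p$ forces $u(T,\cdot)\equiv 0$. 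The linearity of the ansatz in $x$ and $y$ is crucial because it makes the increment operators \eqref{Delta1}--\eqref{Delta2} reduce to expressions that are linear in $l^+$ and $l^-$, which is what makes the maximisation over $l\in L$ tractable.

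After substitution, the $x$-parts cancel identically in both $\Delta_1$ and $\Delta_2$. The $y$-terms reassemble into
\[y\Bigl(\partial_t\pi + \partial_s\pi + \sum_{j\not\equiv i} h_{ij}(s)[\pi(t,p(1+\delta\alpha(j)),j,0) - \pi(t,p,i,s)]\Bigr),\]
which vanishes by Corollary \ref{corollary3.1}. The $u$-part contributes $\partial_t u + \partial_s u + \sum_{j\not\equiv i} h_{ij}(s)[u(t,p(1+\delta\alpha(j)),j,0) - u(t,p,i,s)]$, while the remaining $l$-dependent terms are linear in $l^+$ and $l^-$; for each $j\not\equiv i$ the coefficient of $l^{\bar\alpha(j)}$ works out, after simplifying $\alpha(j)[p(1+\delta\alpha(j))-\alpha(j)\epsilon-\pi]$ etc., to exactly $m_j(t,p,i,s)$ of \eqref{mj}, with one small-order piece weighted by $\lambda_{\bar\alpha(j)}(s)\sum_k k\vartheta_{\bar\alpha(j)}(\{k\})$ and one big-order piece weighted by $h_{ij}(s)K$. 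Since $L=\{0,1\}^2$ is finite and the two control components decouple across the two values of $j\not\equiv i$, one obtains $\max_{l\in L}\sum_j l^{\bar\alpha(j)} m_j = \sum_j \max\{m_j,0\}$, attained by the indicator pattern \eqref{Hatl0}. The ansatz therefore reduces the HJB to the terminal-value problem of Theorem \ref{thm2} with $g\equiv 0$ and $w = \sum_{j\not\equiv i}\max\{m_j,0\}$. Checking $w\in V$ (continuity of $\lambda_\nu$, $h_{ij}$, $\pi$, together with the linear growth of $\pi$ inherited from $\pi \in V$) permits Theorem \ref{thm2} to deliver a classical solution $u\in V\cap cl(\mathcal{D}_{t,s})$ along with its stochastic representation \eqref{u}.

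The final step is verification. I would apply It\^o's formula to $J^{(0)}(t,P_t,I_t,S_t,X_t,Y_t)$ under an arbitrary admissible $\hat l\in\mathcal{L}$, obtaining a compensated-Poisson martingale part plus an integrated drift whose integrand equals $(\partial_t+\partial_s)J^{(0)}$ plus the generator-type sums evaluated at $\hat l_t$. By construction of the HJB this integrand is $\le 0$ pointwise and equals $0$ when $\hat l_t$ agrees with $\hat l^{(0)}(t,P_{t^-},I_{t^-},S_{t^-})$. Taking conditional expectations and using the terminal identity $J^{(0)}(T,\cdot)=x+yp$ then yields $J_{\hat l}\le J^{(0)}$ for every $\hat l$ and equality for $\hat l^{(0)}$, so $J = J^{(0)}$. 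The main obstacle I foresee is not the algebra, which is clean once Corollary \ref{corollary3.1} is in hand, but the martingale step in the verification: one must ensure the stochastic integral is a true martingale rather than merely a local one. This will combine the exponential moment bound \eqref{EbNpT1} on $P$, the linear-growth membership of $\pi$ and $u$ in $V$, and the uniform boundedness of $L$, by a localisation-plus-dominated-convergence argument parallel to the uniqueness part of Theorem \ref{thm2}.
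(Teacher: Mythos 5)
Your proposal follows essentially the same route as the paper: the affine ansatz $x+y\pi+u$, cancellation of the $\pi$-terms via Corollary \ref{corollary3.1}, the identity $\max_{l\in L}\sum_{j\not\equiv i} l^{\bar{\alpha}(j)}m_j=\sum_{j\not\equiv i}\max\{m_j,0\}$, a check that this source term lies in $V$, and an appeal to Theorem \ref{thm2} to solve the resulting linear terminal-value problem for $u$ and obtain the stochastic representation \eqref{u}. The only difference is at the final step: where you carry out the verification by hand (It\^{o}'s formula, nonpositive drift, and a martingale-integrability argument via the exponential moment bound \eqref{EbNpT1}), the paper simply invokes the classical verification theorem (Theorem 8.1, p.~135 of the cited reference), which packages exactly that argument.
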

\begin{proof}
Suppose the solution to the HJB equation \eqref{hjb1} with the terminal condition \eqref{hjb2} exists. We take a solution say $J^{(0)}$, of \eqref{hjb1}-\eqref{hjb2} of the form $J^{(0)}(t,p,i,s,x,y)=x+y\pi(t,p,i,s)+u(t,p,i,s)$ where $u$ is a function that needs to be determined. Now using the similar notations as in \eqref{Delta1} and \eqref{Delta2}, we compute
\begin{align}
&{\Delta}_1J^{(0)}(t,p,i,j,s,x,y,l)=l^{\bar{\alpha}(j)}(\alpha(j)k p-\alpha(j)k\pi(t,p,i,s)+k(\delta-\epsilon)),\label{Delta1J0}\\
&{\Delta}_2J^{(0)}(t,p,i,j,s,x,y,l)=l^{\bar{\alpha}(j)}(\alpha(j)Kp-\alpha(j)K\pi(t,p(1+\delta\alpha(j)),j,0)+K(\delta-\epsilon))\nonumber\\
&+y(\pi(t,p(1+\delta\alpha(j)),j,0)-\pi(t,p,i,s))+u(t,p(1+\delta\alpha(j)),j,0)-u(t,p,i,s)).\label{Delta2J0}
\end{align}
Also since,
\begin{align*}
\diffp{J^{(0)}}{t}+\diffp{J^{(0)}}{s}=\pi\left(\diffp{\pi}{t}+\diffp{\pi}{s}\right)+\diffp{u}{t}+\diffp{u}{s}
\end{align*}
Using corollary \ref{corollary3.1}, the expressions of ${\Delta}_1J^{(0)}$ and ${\Delta}_2J^{(0)}$ we obtain that $J^{(0)}$ satisfies \eqref{hjb1}-\eqref{hjb2} if and only if
\begin{align}
&\diffp{u}{t}+\diffp{u}{s}+\sum_{j\not\equiv i}h_{ij}(s)(u(t,p(1+\delta\alpha(j)),j,0)-u(t,p,i,s)))+\max_{l\in L}\sum_{j\not\equiv i}l^{\bar{\alpha}(j)}m_j(t,p,i,s)=0,\label{hjbeta01}\\
&u(T,p,i,s)=0\quad \text{for every } (p,i,s) \in[0,\infty)\times\mathcal{X}\times[0,\infty).\label{hjbeta02}
\end{align}
where
\begin{align*}
%{\vartheta}^1_{\bar{\alpha}(j)}&=\sum_{k\in\mathcal{K}}k{\vartheta}_{\bar{\alpha}(j)}(\{k\})\label{vartheta1}\\ 
m_j(t,p,i,s):&={\lambda}_{\bar{\alpha}(j)}(s)\sum_{k\in\mathcal{K}}k{\vartheta}_{\bar{\alpha}(j)}(\{k\})(\alpha(j) p-\alpha(j)\pi(t,p,i,s)+(\delta-\epsilon))\\
&+h_{ij}(s)K(\alpha(j)p-\alpha(j)\pi(t,p(1+\delta\alpha(j)),j,0)+(\delta-\epsilon)).
\end{align*}
Since $l^+,l^- \in \{0,1\}$, we note that \eqref{hjbeta01} can be written as
\begin{equation}
\diffp{u}{t}+\diffp{u}{s}+\sum_{j\not\equiv i}h_{ij}(s)(u(t,p(1+\delta\alpha(j)),j,0)-u(t,p,i,s)))+\sum_{j\not\equiv i}\max{\{m_j(t,p,i,s),0\}}=0. \label{hjbeta03}    
\end{equation}
Now using the assumptions (A2) and (A7) and the expression of $m_j$, we have
\begin{align*}
\frac{|m_j(t,p,i,s)|}{1+p}&\leq (c_2{\vartheta}^1_{\bar{\alpha}(j)}+c_1K)\frac{p+|\delta-\epsilon|}{1+p}+c_2{\vartheta}^1_{\bar{\alpha}(j)}\frac{|\pi(t,p,i,s)|}{1+p}+c_1K\frac{|\pi(t,p(1+\delta\alpha(j)),j,0)|}{1+p}\\
&\leq (c_2{\vartheta}^1_{\bar{\alpha}(j)}+c_1K)\frac{p+|\delta-\epsilon|}{1+p}+c_2{\vartheta}^1_{\bar{\alpha}(j)}\frac{|\pi(t,p,i,s)|}{1+p}\\
&+c_1K\frac{p\delta}{1+p}\frac{|\pi(t,p(1+\delta\alpha(j)),j,0)|}{1+p(1+\delta\alpha(j))}+c_1K\frac{|\pi(t,p(1+\delta\alpha(j)),j,0)|}{1+p(1+\delta\alpha(j))}.
\end{align*}
As $\pi$ belongs to $V$, using the above inequality we see that $\sup_{(t,p,i,s)\in\mathcal{D}}\frac{|m_j(t,p,i,s)|}{1+p}<\infty.$ Moreover using assumptions (A1), (A8) and the continuity of $\pi$ it follows from \eqref{mj} that $m_j$ is a continuous function for each $j\not\equiv i$. Therefore we have shown that $m_j$ belongs to $V$ for each $j\not\equiv i$. Hence, the function $\sum_{j\not\equiv i}\max{\{m_j(t,p,i,s),0\}}$ belongs to $V$.\\\\
Thus, using theorem \ref{thm2} we conclude that the partial differential equation \eqref{hjbeta03} with the terminal condition \eqref{hjbeta02} has a unique solution in the class $V\cap cl({\mathcal{D}}_{t,s})$ which is given by (using \eqref{storep})
\begin{equation*}
u(t,p,i,s)=\sum_{j\not\equiv i}E\left[\int_t^T \max{\{m_j(v,P_{v^-},I_{v^-},S_{v^-}),0\}dv}\bigg\vert P_t=p,I_t=i,S_t=s\right].
\end{equation*}
This implies that the HJB equation \eqref{hjb1} with the terminal condition \eqref{hjb2} in the case when $\eta=0$ has a unique solution (say $J^{(0)}$) in the class $V\cap cl({\mathcal{D}}_{t,s})$ where
\begin{align*}
&J^{(0)}(t,p,i,s,x,y)\\
&=x+y\pi(t,p,i,s)+\sum_{j\not\equiv i}E\left[\int_t^T \max{\{m_j(v,P_{v^-},I_{v^-},S_{v^-}),0\}dv}\bigg\vert P_t=p,I_t=i,S_t=s\right].  
\end{align*}
Further since $J^{(0)}$ given by the above equation belongs to $V\cap cl({\mathcal{D}}_{t,s})$, it clearly belongs to the domain of the operator ${\mathcal{A}}^l$ for every constant control $l$. Hence $J^{(0)}$ is a classical solution (see section III.8, pp 134, \cite{fleming}) of the equation \eqref{hjb1} with the terminal condition \eqref{hjb2}. Moreover, since the maximum over the set $L$ in \eqref{hjb1} is attained when $m_j(t,p,i,s)>0$ for each $j\not\equiv i$, i.e., the optimal control ${\Hat{l}}_0$ is given by
\begin{equation*}
{\Hat{l}}^{(0)}(t,p,i,s)=(1_{m_{j_1}(t,p,i,s)>0},1_{m_{j_2}(t,p,i,s)>0})
\end{equation*}
where $\{j_1,j_2\}=\{j\in\mathcal{X}\mid j\not\equiv i\}$.\\
Therefore, it follows from Theorem 8.1, pp 135, \cite{fleming}, that $J^{(0)}$ is an optimal solution to \eqref{hjb1}-\eqref{hjb2} with the optimal control given by \eqref{Hatl0} and thus it is indeed the utility function in the no risk aversion case.
\end{proof}
\section{Conclusion}
In this paper, we modeled the stock price dynamics and the portfolio of a market maker who posts limit orders continuously at the best bid and the best ask prices, using stochastic differential equations obtained through a Poisson random measure. In \cite{pham2}, the expected value of the stock price at the terminal time, the value function and other quantities were shown to be satisfying the corresponding partial differential equations in the viscosity sense which is a weaker notion whereas we obtained them as the classical solution of the corresponding pde. A possible direction to further explore is to study and solve the HJB equation obtained for various other standard utility functions such as the isoelastic utility function, exponential utility, logarithmic utility etc.

\appendix
\centering\section{}
\begin{lemma} \label{lemmaA1}
Under the assumption (A6), the probability distribution of $(P_t,I_t,S_t)$ has full support for every $t \in [0,T]$.
\end{lemma}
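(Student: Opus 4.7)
The plan is to show that for every $(p,i,s)\in [0,\infty)\times\mathcal{X}\times[0,\infty)$ and every open neighborhood $U\ni(p,i,s)$, there exists an event of positive probability on which $(P_t,I_t,S_t)\in U$. The crucial structural fact, also used in the proof of Theorem \ref{thm2}, is that by (A2) every Poisson mass point that can trigger a jump of $(P,I,S)$ lies in the rectangle $[0,T]\times[0,2c_1]$, where $\wp$ is a finite-intensity Poisson process; this makes it feasible to prescribe ``no jump'' or ``exactly one jump'' events explicitly. I would split the argument according to whether $s\ge t$ (realised by having no jump on $[0,t]$) or $s<t$ (realised by exactly one jump, at time approximately $t-s$).

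\emph{Case $s\ge t$.} Consider the event $\{I_0=i,\;S_0\in(s-t-\epsilon,s-t+\epsilon)\cap[0,\infty),\;P_0\in(p-\epsilon,p+\epsilon),\;\wp([0,t]\times[0,2c_1])=0\}$. By (A6), the mutual independence of $(S_0,I_0,P_0,\wp)$, and the Poisson law (the last factor equalling $e^{-2c_1 t}$), this event has positive probability. On it no jump of $(P,I,S)$ occurs, so $(P_t,I_t,S_t)=(P_0,i,S_0+t)$ lies in $(p-\epsilon,p+\epsilon)\times\{i\}\times(s-\epsilon,s+\epsilon)\subseteq U$ for $\epsilon$ small.

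\emph{Case $s<t$.} By (A4), $h(t-s)>0$, so at least one of $h_+(t-s),\,h_-(t-s)$ is strictly positive; WLOG $h_+(t-s)>0$. By (A1), there exist an open neighborhood $N\subset(0,t)$ of $t-s$ and $\gamma>0$ with $h_+\ge\gamma$ on $N$. Pick $i_0\in\mathcal{X}$ with $i_0\not\equiv i$ and $i_0+i$ even, so that $H_{i_0 i}\equiv H_+$; then a Poisson point $(v,z)$ with $I_{v^-}=i_0$ and $z\in H_+(S_{v^-})$ drives the state from $i_0$ exactly to $i$, and multiplies the price by $1+\delta\alpha(i)$. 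Consider the event $E_2$ that $I_0=i_0$, $S_0\in[0,\epsilon_0)$, $P_0\in(p^{\ast}-\epsilon,p^{\ast}+\epsilon)$ with $p^{\ast}:=p/(1+\delta\alpha(i))$, together with $\wp$ having exactly one point in $A(S_0):=\{(v,z):v\in N,\;z\in H_+(S_0+v)\}$ and no other points in $[0,t]\times[0,2c_1]$. On $E_2$ a single jump occurs at some $v\in N$, and by construction it is the $i_0\to i$ transition; hence $I_t=i$, $P_t=P_0(1+\delta\alpha(i))$ lies within $O(\epsilon)$ of $p$, and $S_t=t-v$ lies within $|N|$ of $s$, so $(P_t,I_t,S_t)\in U$ for small parameters.

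The principal (mild) technical point is the $S_0$-dependence of the reference set $A(S_0)$ in Case 2. To establish $\mathbb{P}(E_2)>0$ I would condition on $(S_0,I_0,P_0)=(s_0,i_0,p_0)$ and use the uniform bounds $|A(s_0)|=\int_N h_+(s_0+v)\,dv\ge|N|\gamma$, valid for $s_0\in[0,\epsilon_0)$ with $\epsilon_0$ small enough that $s_0+N$ remains inside $\{h_+\ge\gamma\}$, together with $|([0,t]\times[0,2c_1])\setminus A(s_0)|\le 2c_1 t$; these yield a positive lower bound, uniform in $s_0$, on the conditional probability that $\wp$ has exactly one point in $A(s_0)$ and none in its complement within $[0,t]\times[0,2c_1]$. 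Integrating against the joint law of $(S_0,I_0,P_0)$ supplied by (A6) then gives $\mathbb{P}(E_2)>0$, completing the proof.
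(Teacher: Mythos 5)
Your proof is correct and follows essentially the same strategy as the paper's: decompose according to whether the process has made no jump ($s\ge t$) or exactly one jump near time $t-s$ ($s<t$), and show each event has positive probability using (A6) and the positivity of the relevant rates. The only difference is that you compute directly with the Poisson random measure, whereas the paper phrases the one-jump case via the semi-Markov quantities $F$, $T_1$, $T_2$ and the embedded transition probabilities; your version is, if anything, more careful about the dependence of the first holding time on $S_0$.
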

\begin{proof}
Consider $(t,p,i,s)\in\mathcal{D}$ and $\varepsilon>0$. Then
\begin{align*}
&\mathbb{P}(P_t\in(p-\varepsilon,p+\varepsilon),I_t=i,S_t\in(s-\varepsilon,s+\varepsilon),T_1>t)\\ =&\mathbb{P}(P_0\in(p-\varepsilon,p+\varepsilon),I_0=i,S_0\in(s-t-\varepsilon,s-t+\varepsilon))(1-F(t)).  
\end{align*}
Now when $s\geq t$, using assumption (A6) and the fact that $F$ never attains value 1 we conclude
\begin{equation}
\mathbb{P}(P_t\in(p-\varepsilon,p+\varepsilon),I_t=i,S_t\in(s-\varepsilon,s+\varepsilon),T_1>t)>0. \label{PtItStT1t1}
\end{equation} 
Further for $t>0$, we calculate
\begin{align}
\mathbb{P}(T_1\leq t,T_2-T_1>t)&=\mathbb{P}(T_2-T_1>t)\mathbb{P}(T_1 \leq t)\nonumber\\
&=(1-F(t))F(t)>0.\label{T1T2T1t}
\end{align}
Now when $s<t$, using \eqref{T1T2T1t} we can write
\begin{align}
&\mathbb{P}(P_t\in(p-\varepsilon,p+\varepsilon),I_t=i,S_t\in(s-\varepsilon,s+\varepsilon),T_1\leq t,T_2-T_1>t)\nonumber\\
=&\mathbb{P}(P_t\in(p-\varepsilon,p+\varepsilon),I_t=i,S_t\in(s-\varepsilon,s+\varepsilon)\mid T_1\leq t,T_2-T_1>t)(1-F(t))F(t)\nonumber\\
=&\mathbb{P}(P_0(1+\delta\alpha(I_{T_1}))\in(p-\varepsilon,p+\varepsilon),I_{T_1}=i,T_1\in (t-s-\varepsilon,min\{t,t-s+\varepsilon\}))(1-F(t))F(t)\nonumber\\
=&\mathbb{P}(P_0(1+\delta\alpha(T_1))\in(p-\varepsilon,p+\varepsilon),I_{T_1}=i)F(min\{t,t-s+\varepsilon\}-F(t-s-\varepsilon))(1-F(t))F(t)\nonumber\\
=&\mathbb{P}\bigg(P_0\in\bigg(\frac{p-\varepsilon}{1+\delta\alpha(i)},\frac{p+\varepsilon}{1+\delta\alpha(i)}\bigg)\bigg)F(min\{t,t-s+\varepsilon\}-F(t-s-\varepsilon))(1-F(t))F(t)\cdot\nonumber\\
&\sum_{j\not\equiv i}{\Hat{p}}_{ij}\mathbb{P}(I_0=j)>0 \label{PtItStT1t2}
\end{align}
using the assumption (A6), the expression of ${\Hat{p}}_{ij}$ and the fact that $F$ is a strictly increasing function that never attains the value $1$. Now since
\begin{align*}
&\mathbb{P}(P_t\in(p-\varepsilon,p+\varepsilon),I_t=i,S_t\in(s-\varepsilon,s+\varepsilon))\\
\geq&\mathbb{P}(P_t\in(p-\varepsilon,p+\varepsilon),I_t=i,S_t\in(s-\varepsilon,s+\varepsilon),T_1>t)\\
+&\mathbb{P}(P_t\in(p-\varepsilon,p+\varepsilon),I_t=i,S_t\in(s-\varepsilon,s+\varepsilon),T_1\leq t,T_2-T_1>t),
\end{align*}
using \eqref{PtItStT1t1} and \eqref{PtItStT1t2} in the case when $s\geq t$ and $s<t$ respectively, we conclude that for $(t,p,i,s)\in\mathcal{D},\varepsilon>0$, 
\[\mathbb{P}(P_t\in(p-\varepsilon,p+\varepsilon),I_t=i,S_t\in(s-\varepsilon,s+\varepsilon))>0.\]
Hence we have proved that the distribution of $(P_t,I_t,S_t)$ has full support for every $t\in[0,T]$.
\end{proof}

\nocite{*}
  \bibliography{Ref.bib}
  \bibliographystyle{siam}
\end{document}